\documentclass[preprint,11pt]{elsarticle}

\usepackage{amsmath,amssymb}
\usepackage{booktabs}
\usepackage{graphicx}
\usepackage{siunitx}
\usepackage{amsthm}
\usepackage{algorithm}
\usepackage{algpseudocode}
\usepackage[hidelinks]{hyperref}
\usepackage{xurl}
\usepackage{microtype}

\newtheorem{lemma}{Lemma}
\newtheorem{theorem}{Theorem}
\newtheorem{proposition}{Proposition}
\theoremstyle{remark}
\newtheorem{remark}{Remark}

\newcommand{\rmin}{r_{\min}}
\newcommand{\rmax}{r_{\max}}
\newcommand{\Tmat}{\mathbf{T}}
\newcommand{\Umat}{\mathbf{U}}
\newcommand{\Order}{\mathcal{O}}
\DeclareMathOperator{\acos}{arccos}

\newcommand{\confAspectOblateSpheroid}{3.8}
\newcommand{\confLocalOrderBelowRc}{4}
\newcommand{\confLocalOrderAboveRc}{2.3}
\newcommand{\confLocalWindowAboveRc}{0.07}
\newcommand{\confOrderFourth}{4.0}

\newcommand{\confOrderPrismProduct}{1.6}
\newcommand{\confOrderPrismMobius}{2.5}
\newcommand{\confOrderSpheroidNosub}{1.5}
\newcommand{\confOrderSpheroidSubTwo}{3.96}
\newcommand{\confOrderCylNosubTwo}{1.50}
\newcommand{\confOrderCylNosubOne}{1.5}
\newcommand{\confZenithPlainSixForty}{2e-5}
\newcommand{\confMobiusProductAgree}{1e-6}
\newcommand{\confOrderBulletNosub}{2.8}

\newcommand{\confCscaStar}{0.5413924}
\newcommand{\confKrmaxFullLadderSlope}{3.5}

\newcommand{\confKrmaxOrderOnePtTwoFive}{4.00}
\newcommand{\confKrmaxOrderTwoPtFive}{4.09}
\newcommand{\confKrmaxOrderFive}{3.99}
\newcommand{\confKrmaxOrderTen}{4.00}
\newcommand{\confAspectTallCylinder}{4.1}
\newcommand{\confAspectFlatPrism}{3.8}
\newcommand{\confAspectTallPrism}{3.7}
\newcommand{\confArcEndpointMatch}{7\times10^{-6}}

\newcommand{\confOnsetTwoPtFive}{12}
\newcommand{\confOnsetTen}{16}
\newcommand{\confOnsetTwenty}{24}

\journal{Journal of Quantitative Spectroscopy and Radiative Transfer}
\makeatletter
\def\ps@pprintTitle{%
  \let\@oddhead\@empty \let\@evenhead\@empty
  \let\@oddfoot\@empty \let\@evenfoot\@empty}
\makeatother

\begin{document}

\begin{frontmatter}

\title{Boundary-conformal integration for the invariant-imbedding
T-matrix method: high-order convergence for faceted particles}

\author[pku]{Zihua Wu\corref{cor}}
\ead{wuzihua@pku.edu.cn}
\author[seu]{Yu Xiong}
\cortext[cor]{Corresponding author.}
\address[pku]{School of Earth and Space Sciences, Peking University,
Beijing 100871, China}
\address[seu]{State Key Laboratory of Millimeter Waves, School of Information Science
and Engineering, Southeast University, Nanjing 210096, China}

\begin{abstract}
The invariant-imbedding T-matrix method (IITM) is a standard tool for light scattering by
large, sharply faceted, non-axisymmetric particles (atmospheric ice crystals and mineral
dust) where the surface-based extended boundary condition method loses accuracy. Its accuracy
is limited by ``staircasing'': the dielectric contrast of a faceted particle is integrated
across boundaries that cut the quadrature grid, so standard quadrature converges at low
algebraic order. We show that this non-smoothness has a single geometric origin, the tangencies
of the integration sphere to the faces and edges of the particle, which produce
\emph{jumps, kinks, and half-integer branches} according to the tangency type, in all three
integration directions. A boundary-conformal scheme removes them using closed-form azimuthal
coefficients, panel splitting at the analytically known tangency loci, and a square-root
substitution $x\mapsto x_c+t^2$ that absorbs the half-integer branches. For a hexagonal prism
the azimuthal integration becomes exact and the zenithal and radial directions recover spectral
and fourth-order convergence; because the construction depends only on the contact geometry, it
extends to any convex polyhedron, demonstrated on the solid hexagonal bullet (a faceted ice
habit with tilted faces). The zenithal crossing is a square-root branch rather than a kink, so
the established interval-splitting alone gives only $\Order(N^{-3})$, while the radial step
removes the half-integer edge branch that caps the Riccati recurrence on faceted particles. The
convergence orders are fixed by the local contact geometry and verified size-independent up to
$k\rmax=20$; what grows with size is the resolution needed to reach each asymptotic regime, not
the order.
\end{abstract}

\begin{keyword}
invariant-imbedding T-matrix method \sep light scattering \sep non-spherical particles \sep
faceted particles \sep numerical quadrature \sep spectral convergence \sep matrix Riccati equation
\end{keyword}

\end{frontmatter}

\section{Introduction}
\label{sec:intro}

The T-matrix linearly relates the expansion coefficients of the scattered field to
those of the incident field and encodes the full single-scattering response of a
particle \citep{waterman1971,mishchenko2002}. For non-spherical particles it is
classically computed by the extended boundary condition method (EBCM, or null-field method), which solves a single surface
integral equation and is spectrally accurate for smooth, moderate-aspect shapes but
becomes ill-conditioned for large size parameters, extreme aspect ratios, and sharp
edges. Recent surface-integral-equation T-matrix formulations restore stability at large
aspect ratio for smooth penetrable particles~\citep{ganeshhawkins2025}, but do not address
the faceted case. The invariant-imbedding T-matrix method (IITM) \citep{johnson1988,bi2013,bi2014,sun2020iitm}
trades the single surface integral for a volume formulation: the particle is tiled
into concentric spherical shells and the T-matrix is grown outward from an inner
core by a matrix Riccati recursion in the shell radius $r$, with the angular
coupling on each shell obtained from a Fourier/quadrature analysis of the contrast.
IITM is robust where EBCM fails, and underpins community ice-crystal
single-scattering databases \citep{yang2013ice}. The present work develops the boundary-conformal high-order
quadrature within the open-source \texttt{TransitionMatrices.jl} framework
\citep{transitionmatrices}, extending its baseline IITM solver.

The price of the volume formulation is that the integrands are non-smooth wherever
the (spherical) integration grid crosses the (faceted) particle boundary. On a fixed
shell the contrast $\varepsilon(\vartheta,\varphi)-1$ is a piecewise-constant
function with jumps at the boundary crossings; along the radius the shell--particle
intersection changes character at the radii where the sphere becomes tangent to a
face or an edge. Standard equidistant (fast Fourier transform, FFT) or Gauss--Legendre quadrature of such
integrands converges only at low algebraic order, the ``staircasing'' that forces
IITM users toward very fine angular and radial grids. \citet{zhai2019} improved the
\emph{zenithal} integration for hexagonal prisms by \emph{splitting} the
$\vartheta$-integral at the shell--surface crossing $r\sin\vartheta=b$ (the
inscribed-cylinder radius) so that Gauss--Legendre is not applied across the discontinuity,
obtaining a substantial accuracy gain. They split at that crossing without characterizing its
branch type or deriving a convergence order.
\citet{doicu2019} analyzed
the \emph{radial} recurrence, recognizing it as a matrix Riccati equation and
integrating its linearized (Hamiltonian) form with a Pad\'e approximation of the
matrix exponential, demonstrated on smooth spheroids. Subsequent IITM work
has targeted efficiency and analysis rather than quadrature accuracy ---
$N$-fold-symmetry exploitation \citep{hu2021sym,zhao2022sym}, dimension-variable growth of the
recurrence \citep{hu2023dviim}, and analytic Jacobians and linearization
\citep{sun2021jac,sun2022liitm}; \citet{hu2020}
give an empirical Gauss-point-count rule for the zenith integral, and \citet{zhang2022vswf} a
phase-function criterion for the multipole truncation $n_{\max}$, but provide no
convergence-order analysis, branch-point identification, or smoothing substitution;
\citet{wang2023grid} add a flexible discrete-grid handling of arbitrary shape and internal
inhomogeneity. These works target the multipole truncation, symmetry reduction, linearization,
and shape/inhomogeneity flexibility; none addresses the geometric-branch structure of the
contrast \emph{quadrature} or the per-direction convergence orders treated here.
(For the broader T-matrix state of the art see the database update of
\citet{mishchenko2020db}.)

In this paper we argue that both improvements are facets of a single structure. The
non-smoothness in \emph{every} integration direction originates at the same
geometric events (the sphere--face and sphere--edge tangencies), and at each
such event the integrand acquires a \emph{jump, a kink, or a half-integer branch} in
the distance to the tangency, according to the tangency type. We use ``tangency'' loosely for the
family of contact events that seed the non-smooth branches; a single feature can act in more than
one direction: the top/bottom face, for instance, produces a \emph{zenithal jump} at
$\vartheta_{\mathrm{cap}}$ (the fixed-$r$ slice crossing the cap plane $z=\pm h/2$) and a
\emph{radial} $\sigma=1$ kink at the genuine sphere--face tangency $r=h/2$, both tracing to the
same face. Concretely:
\begin{itemize}
\item \textbf{Azimuthal.} On a shell that cuts the lateral faces, the contrast is a
  boxcar in $\varphi$; its Fourier coefficients are available in closed form as a sum
  over the boundary $\varphi$-crossings (no quadrature error).
\item \textbf{Zenithal.} As the polar angle $\vartheta$ sweeps, the boundary crossing
  $\vartheta_b(r)$ has a square-root branch where the shell is tangent to the
  inscribed cylinder; the integrand is analytic in $\sqrt{\vartheta-\vartheta_b}$, so
  $\vartheta = \vartheta_b + t^2$ removes it and Gauss--Legendre becomes spectral.
\item \textbf{Radial.} The shell--face tangency gives a slope kink (cured by panel
  splitting), while a vertical-\emph{edge} tangency gives a $(r-r_c)^{3/2}$ branch
  (analytic in $\sqrt{r-r_c}$); the same $t^2$ substitution restores fourth-order
  convergence of the Riccati integrator.
\end{itemize}
The unifying object is the square-root substitution $x\mapsto x_c+t^2$ that maps any
half-integer power $(x-x_c)^{k/2}$ to an analytic integer power in $t$. The substitution itself
is a standard device for removing algebraic endpoint and vertex singularities from Gauss
quadrature, long used in the boundary-element literature \citep{duffy1982,telles1987,davis1984};
our contribution is not the substitution but the \emph{a priori} identification --- from the
contact geometry alone --- of \emph{which} half-integer branch each tangency produces and
\emph{where}, so that the substitution is applied automatically and without tuning. We make this
precise (\S\ref{sec:branches}), give the integration scheme (\S\ref{sec:scheme}), and
demonstrate it on a hexagonal prism, a finite cylinder, a prolate spheroid, and, through
a convex-polyhedron form of the breakpoints (\S\ref{sec:polyhedra}), the solid hexagonal
bullet, a faceted ice habit with tilted faces (\S\ref{sec:results}).

\paragraph{Scope and prior work} The zenithal interval-splitting is due to
\citet{zhai2019}. Our additions are: (i) the crossing is a \emph{square-root branch}
(Lemma~\ref{lem:sqrt}), not the finite kink their model assumes, so splitting alone
gives only $\Order(N^{-3})$ (Proposition~\ref{prop:gauss}); (ii) the substitution
$\vartheta=\vartheta_b+t^2$ restores \emph{spectral} convergence
(Theorem~\ref{thm:zenith}); (iii) an explicit convergence-order analysis in all three
directions; and (iv) the azimuthal coefficients are taken in quadrature-free closed
form (Lemma~\ref{lem:arc}). At the accuracies reported by \citet{zhai2019} the angular
cost is in fact dominated by resolving the oscillation of high-order Wigner-$d$
functions ($N_q\approx600$--$1000$, the $q$-dependence of Remark~\ref{rem:unif}); the
square-root branch dominates only at higher accuracy, where the substitution is
decisive. The radial high-order recurrence
itself is \emph{not} new: \citet{doicu2019} already linearize the Riccati equation
and use a high-order matrix exponential. Our radial contribution is to show that on
\emph{faceted} particles (which \citet{doicu2019} do not treat; their tests are a
smooth spheroid and a sphere) this high-order recurrence is capped, without any visible
warning in the output, at
$\Order(N^{-5/2})$ by the half-integer edge branch, and that the $t^2$ substitution
restores its design order. The two together give a high-order IITM in all three
integration directions for the faceted particles that motivate the method.

\section{The IITM and its three quadratures}
\label{sec:background}

We summarize only what is needed; see \citet{johnson1988,doicu2019} for the full
derivation. The particle, of relative refractive index $m$, is enclosed in a ball of
radius $\rmax$; a homogeneous core of radius $\rmin$ --- the radius of the largest sphere
inscribed in the particle ($\rmin=\min(b,h/2)$ for the prism), distinct from the per-feature
critical radii of \S\ref{sec:radial} --- is initialized with the Mie T-matrix, and the
T-matrix is grown to $\rmax$. The
regular and outgoing radial blocks $\mathbf{J}(r)$, $\mathbf{H}(r)$ are built from
Riccati--Bessel functions, and the contrast coupling matrix is
\begin{equation}
  \label{eq:Udef}
  [\Umat(r)]_{pp'} \;=\; \int_{\mathbb S^2}\bigl[\varepsilon(r,\vartheta,\varphi)-1\bigr]\,
    \mathbf f_{p}(\vartheta,\varphi)\cdot\mathbf f_{p'}(\vartheta,\varphi)\,\mathrm d\Omega ,
\end{equation}
where the shell contrast $\varepsilon-1$ is projected onto products of the (normalized) vector
spherical wave functions (VSWFs) indexed by $p=(n,m,\text{parity})$. Both~\eqref{eq:Udef} and its zenithal
reduction~\eqref{eq:Uzenith} (introduced below) are written in schematic block form, suppressing the polarization indices, the
conjugation/parity bookkeeping, and overall constants; the full electromagnetic block algebra is
standard (see \citet{johnson1988,bi2013,doicu2019}). In particular the product
$\mathbf f_p\cdot\mathbf f_{p'}$ carries the appropriate complex conjugation, with the convention
fixed so that the surviving azimuthal order is $q=m'-m$. With this coupling, the imbedding recursion
is, in the continuous limit, the matrix Riccati equation
\begin{equation}
  \label{eq:riccati}
  \frac{d\Tmat}{dr}
  \;=\; \mathrm{i}k\,\bigl(\mathbf{J}^{\!\top} + \Tmat\,\mathbf{H}^{\!\top}\bigr)\,
        \Umat(r)\,\bigl(\mathbf{J} + \mathbf{H}\,\Tmat\bigr),
  \qquad \Tmat(\rmin)=\Tmat_{\mathrm{Mie}} .
\end{equation}
Each entry of $\Umat(r)$ separates into the two angular quadratures: the $\varphi$-integral of
\eqref{eq:Udef} selects the single azimuthal order $q=m'-m$, leaving a one-dimensional zenithal
integral
\begin{equation}
  \label{eq:Uzenith}
  [\Umat(r)]_{pp'} \;=\; \int_0^{\pi} g_{pp'}(\vartheta)\,c_{m'-m}(r,\vartheta)\,
    \sin\vartheta\,\mathrm d\vartheta ,
\end{equation}
where $c_q(r,\vartheta)$ is the one-period azimuthal Fourier coefficient of the contrast
(\S\ref{sec:azimuth}; closed form \eqref{eq:arc}) and the weight $g_{pp'}(\vartheta)$ is a product
of the angular functions $\pi_{mn},\tau_{mn},d^{\,n}_{0m}$ (the polarization and block structure
contributes a few such terms, all at the same $q=m'-m$). Computing a scattering observable therefore
requires three nested integrations: the \emph{azimuthal} ($\varphi\!\to c_q$) and
\emph{zenithal} ($\vartheta$) integrals that build $\Umat(r)$ on each shell, and the
\emph{radial} integration of \eqref{eq:riccati}. For an $N$-fold particle the azimuthal integral
is taken over one period $[0,2\pi/N]$ and only Fourier orders $q$ that are multiples of $N$
survive.
Figure~\ref{fig:schematic} shows how, as the shell radius $r$ grows, the integration
sphere becomes tangent to the prism at distinct feature types, each of which seeds the
non-smooth branches analyzed below in one or more of the three quadratures.

\begin{figure}[htbp]
  \centering
  \includegraphics[width=\textwidth]{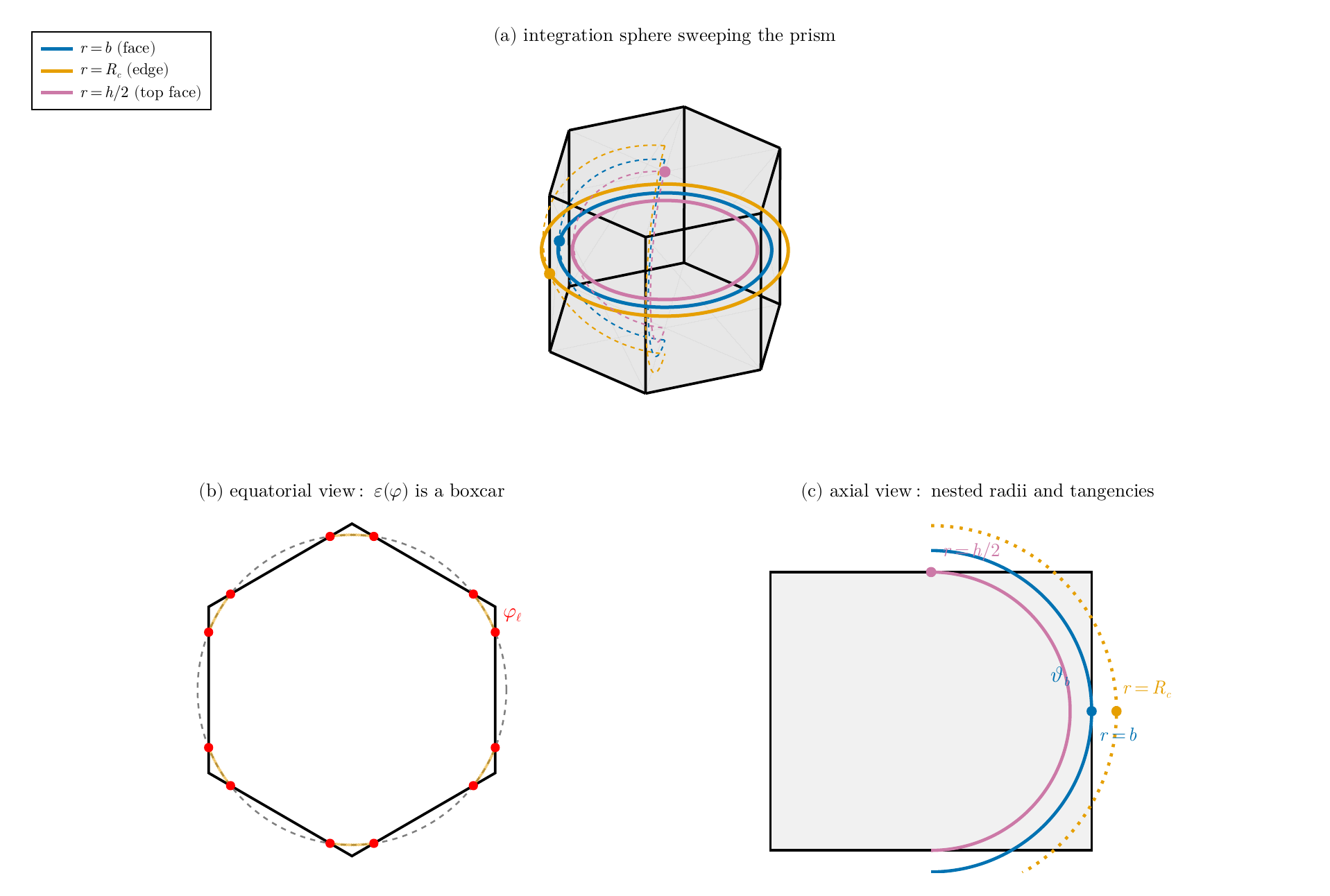}
  \caption{Geometry of the three IITM quadratures for a regular hexagonal prism
  (apothem $b$, circumradius $R_c=b/\cos(\pi/N)$, half-height $h/2$).
  \textbf{(a)} As the integration sphere of radius $r$ sweeps outward it becomes
  tangent to the particle at three feature types (the vertical faces ($r=b$), the
  vertical edges ($r=R_c$), and the top/bottom faces ($r=h/2$)), and each tangency
  type seeds one or more of the non-smooth branches analyzed below (e.g.\ the vertical
  face seeds both the zenithal $\sqrt{\,}$ branch and a radial kink).
  \textbf{(b)} In the equatorial plane, for $b<r_0<R_c$ the shell contrast
  $\varepsilon(\varphi)$ is a boxcar (unity inside the particle, near the vertices; zero
  outside, near the face centers) with jumps at the crossings $\varphi_\ell$ where
  $r_0=R(\varphi)$, the staircasing removed exactly by the closed-form arc
  coefficients (Lemma~\ref{lem:arc}).
  \textbf{(c)} In an axial plane the three radii nest: the inscribed-cylinder (face)
  tangency fixes the zenithal square-root breakpoint $\vartheta_b=\arcsin(b/r)$
  (Lemma~\ref{lem:sqrt}), while the edge tangency at $r=R_c$ (reached out of this plane,
  dotted) produces the $(r-R_c)^{3/2}$ radial branch (Proposition~\ref{prop:radial}).}
  \label{fig:schematic}
\end{figure}

\section{Analysis}
\label{sec:branches}

We now make the convergence statements precise. Throughout, ``analytic'' means
real-analytic with a complex-analytic continuation to a neighborhood of the real
interval (so that a Bernstein ellipse exists), and a regular $N$-gon cross-section
has apothem (inradius) $b$ and circumradius $R_c=b/\cos(\pi/N)$.

\subsection{Azimuthal exactness}
\label{sec:azimuth}

On a shell at cylindrical radius $\rho$ with $b<\rho<R_c$, the contrast is a boxcar
in $\varphi$: inside the polygon for $\varphi$ in a union of arcs centered on the
vertices, of half-width $w=\pi/N-\delta(\rho)$, $\delta(\rho)=\acos(b/\rho)$.

\begin{lemma}[Exact azimuthal coefficients]\label{lem:arc}
With the one-period, unnormalized convention
$c_q(\rho)=\int_0^{2\pi/N}[\varepsilon(\rho,\varphi)-1]\,e^{+\mathrm i q\varphi}\,\mathrm d\varphi$
(only $q\in N\mathbb Z$ survive for an $N$-fold particle), the coefficient is, for $q\neq0$,
\begin{equation}
  \label{eq:arc}
  c_q(\rho) \;=\; (m^2-1)\sum_{\mathrm{arcs}}
              \frac{e^{\mathrm{i}q\varphi_1}-e^{\mathrm{i}q\varphi_0}}{\mathrm{i}q},
\end{equation}
with $\varphi_{0,1}$ the boundary $\varphi$-crossings, and $c_0=(m^2-1)\sum_{\mathrm{arcs}}
(\varphi_1-\varphi_0)$. The expression is exact (no quadrature).
\end{lemma}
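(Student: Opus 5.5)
The plan is to exploit the fact that on a fixed shell the contrast is piecewise constant in $\varphi$, so the Fourier integral over one period splits into finitely many elementary integrals of an exponential. First I will fix the geometry. On a circle of cylindrical radius $\rho$ with $b<\rho<R_c$, the point $(\rho\cos\varphi,\rho\sin\varphi)$ lies inside the regular $N$-gon exactly when $\rho\le R(\varphi)$, where $R(\varphi)=b/\cos(\varphi-\varphi_j)$ is the polygon's radial function relative to the nearest face normal $\varphi_j$. Solving $\rho\cos(\varphi-\varphi_j)=b$ gives crossings at $\varphi_j\pm\delta(\rho)$, $\delta=\acos(b/\rho)$. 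Within one period $[0,2\pi/N]$ there are therefore finitely many crossings $\varphi_\ell$, and they partition the period into arcs on which $\varepsilon-1$ equals $m^2-1$ (inside, the arcs of half-width $\pi/N-\delta$ centred on the vertices) or $0$ (outside). If the period boundary falls inside a vertex arc, that arc is split into two pieces; I will keep these pieces as separate arcs rather than recentre the window. Because the integrand has period $2\pi/N$ whenever $q\in N\mathbb Z$, shifting the window would not change the integral anyway.

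Next I will evaluate the integral arc by arc. By additivity,
\[
  c_q(\rho)=\sum_{\mathrm{arcs}}(m^2-1)\int_{\varphi_0}^{\varphi_1}e^{\mathrm iq\varphi}\,\mathrm d\varphi ,
\]
and each term is elementary:
\[
  \int_{\varphi_0}^{\varphi_1}e^{\mathrm iq\varphi}\,\mathrm d\varphi=\frac{e^{\mathrm iq\varphi_1}-e^{\mathrm iq\varphi_0}}{\mathrm iq}
  \quad(q\neq0),
\]
and it equals $\varphi_1-\varphi_0$ when $q=0$. This gives \eqref{eq:arc} and the formula for $c_0$. The only inputs are the crossing angles, which are closed-form via $\delta(\rho)$, so no quadrature is involved.

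To justify the parenthetical remark that only $q\in N\mathbb Z$ survive, I will note that the full-circle coefficient $\int_0^{2\pi}$ of a $2\pi/N$-periodic function equals $\sum_{j=0}^{N-1}e^{2\pi\mathrm iqj/N}$ times the one-period integral. This geometric sum vanishes unless $N\mid q$.

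No step is a genuine obstacle; the proof is essentially an additivity argument. The points that need care are bookkeeping ones:
\begin{itemize}
\item the correct orientation $\varphi_0<\varphi_1$ of each arc;
\item the treatment of an arc straddling the period boundary;
\item the degenerate regimes $\rho\le b$ (one full arc, so $c_q=0$ for $q\ne0$ and $c_0=(m^2-1)2\pi/N$) and $\rho\ge R_c$ (no arcs, so $c_q=0$).
\end{itemize}
The degenerate regimes are consistent with the same formula, applied with one full arc in the first case and an empty sum in the second. Measure-zero boundary points do not affect the integral, so the choice of $\le$ versus $<$ at the crossings is immaterial.
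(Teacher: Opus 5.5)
Your proposal is correct and is the paper's own argument: the paper notes only that \eqref{eq:arc} is the integral of a piecewise-constant function and is therefore exact by inspection, and your arc-by-arc additivity with the elementary exponential integral makes exactly that step explicit. The bookkeeping you add is sound and fills in details the paper leaves implicit: the crossings $\varphi_j\pm\delta(\rho)$, the arc straddling the period boundary, the geometric-sum argument that only $q\in N\mathbb Z$ survive, and the degenerate regimes $\rho\le b$ and $\rho\ge R_c$.
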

\noindent Equation \eqref{eq:arc} is the integral of a piecewise-constant function, so it
is exact by inspection. By contrast, the equidistant-sample-then-FFT coefficients of a
boxcar converge to \eqref{eq:arc} only at first order:
\begin{proposition}[Sharpness of the FFT rate]\label{prop:fft}
The $N_\varphi$-point equidistant (trapezoidal) Fourier coefficient $\hat c_q^{N_\varphi}$
of a function with a jump discontinuity satisfies the upper bound
$\hat c_q^{N_\varphi}-c_q = \Order(1/N_\varphi)$, and for boundary positions $\varphi_\ell$
in generic position the rate is attained along a subsequence (the error oscillates with
$N_\varphi$ via $\sin(N_\varphi\varphi_\ell)$, so it is no faster than $\Order(1/N_\varphi)$ but
not monotone).
\end{proposition}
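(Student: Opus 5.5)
The plan is to compute the trapezoidal error in closed form for a single jump and then superpose over jumps. Write the one-period contrast as a smooth part plus a finite sum of Heaviside steps, $\varepsilon-1 = s(\varphi) + \sum_\ell \alpha_\ell H(\varphi-\varphi_\ell)$ on the periodic interval. For the boxcar of Lemma~\ref{lem:arc} the part $s$ is constant, and in general it can be made continuous and piecewise smooth. The trapezoidal rule applied to the periodic smooth part converges faster than $\Order(1/N_\varphi)$: for a smooth periodic function it converges spectrally, and for a merely continuous piecewise-$C^1$ one at least at $\Order(N_\varphi^{-2})$ for fixed $q$. So the whole error comes from the step terms, and it suffices to analyse one jump $\alpha H(\varphi-\varphi_\ell)$ multiplied by $e^{\mathrm iq\varphi}$.

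For a single jump I would use the exact geometric-sum identity. Take nodes $\varphi_j = jL/N_\varphi$ with $L=2\pi/N$ and $q\in N\mathbb Z$. Then $\hat c_q^{N_\varphi}$ restricted to the nodes in $[\varphi_\ell, L)$ is a partial geometric sum, which evaluates in closed form to
\[
\frac{L}{N_\varphi}\,\frac{e^{\mathrm iq\varphi_{j^*}}-e^{\mathrm iqL}}{1-e^{\mathrm iqL/N_\varphi}},
\]
where $j^*=\lceil \varphi_\ell N_\varphi/L\rceil$. The exact value is $(e^{\mathrm iqL}-e^{\mathrm iq\varphi_\ell})/(\mathrm iq)$. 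Expanding $1-e^{\mathrm iqL/N_\varphi} = -\mathrm iqL/N_\varphi + \Order(N_\varphi^{-2})$, the difference reduces to
\[
\frac{e^{\mathrm iq\varphi_{j^*}}-e^{\mathrm iq\varphi_\ell}}{\mathrm iq} + \Order(N_\varphi^{-1}).
\]
Here $0\le \varphi_{j^*}-\varphi_\ell < L/N_\varphi$. The error is therefore bounded by $|\alpha|\,(L/N_\varphi)(1+\Order(1))$, which proves the upper bound. Summing over the finitely many crossings in the arc sum of \eqref{eq:arc} keeps the bound $\Order(1/N_\varphi)$, uniformly in $q$ on any fixed range. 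The trapezoid end-weight convention at a node that lands exactly on $\varphi_\ell$ changes only an $\Order(1/N_\varphi)$ term and can be absorbed.

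For sharpness, I would extract the leading term explicitly. Set the fractional offset $\theta_\ell(N_\varphi) = \{\varphi_\ell N_\varphi/L\}\in[0,1)$. The leading error for jump $\ell$ is then
\[
\alpha_\ell\, \frac{L}{N_\varphi}\, e^{\mathrm iq\varphi_\ell}\Bigl(\tfrac12-\theta_\ell(N_\varphi)\Bigr) + \Order(N_\varphi^{-2}),
\]
the familiar sawtooth term. This is where the oscillation in $N_\varphi$ enters; the statement's $\sin(N_\varphi\varphi_\ell)$ is the first Fourier mode of this sawtooth. The total error is $N_\varphi^{-1}\sum_\ell \alpha_\ell e^{\mathrm iq\varphi_\ell}(\tfrac12-\theta_\ell)$, and this sum is what I need to keep away from zero.

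The main obstacle is showing that the sum over the jumps does not cancel for all large $N_\varphi$. In the boxcar the jumps come in $\pm$ pairs, so cancellation is a real concern. I would use genericity: if $\varphi_\ell/L$ and $\varphi_{\ell'}/L$ are irrational and rationally independent, the Kronecker/Weyl equidistribution theorem makes the vector $(\theta_\ell(N_\varphi))_\ell$ dense in the torus. Along a subsequence one can then choose, say, $\theta_1\to 0$ with the remaining $\theta_\ell\to\tfrac12$. This leaves a nonzero limit $\tfrac12|\alpha_1|$ times the rescaled error, so the error is $\Omega(1/N_\varphi)$ along that subsequence and is not monotone. The non-generic exceptions, where the $\varphi_\ell$ are rational multiples of $L$ and the sawtooths can align or cancel, form a measure-zero set, which is what ``generic position'' excludes.
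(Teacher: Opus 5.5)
Your upper-bound argument is correct and differs from the paper's. The paper quotes Koksma's inequality for the bounded-variation integrand, then argues sharpness from the aliasing identity using only the leading alias pair $c_{q\pm N_\varphi}$. You instead sum each step's geometric series exactly, and this gives more than the paper does. Your sawtooth $h\sum_\ell\alpha_\ell e^{\mathrm iq\varphi_\ell}(\tfrac12-\theta_\ell)$, with $h=L/N_\varphi$, is the full resummation of the alias series. Every $c_{q+jN_\varphi}$ is of order $1/(jN_\varphi)$, so the $j=\pm1$ pair supplies only the first harmonic $\sin(N_\varphi\varphi_\ell)$. One bookkeeping point: a single step $H(\varphi-\varphi_\ell)$ on $[0,L)$ also jumps at the node $\varphi=0$, which adds $\alpha_\ell h/2$ to your per-jump error. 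These terms cancel only after summation, because $\sum_\ell\alpha_\ell=0$ for a periodic step function; you should say so.

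The gap is in the non-cancellation step, which you correctly identify as the crux. Your genericity hypothesis is that $1$ and the $\varphi_\ell/L$ are rationally independent, so that the offsets $(\theta_\ell)$ are dense in the torus. It is never satisfied by the function the proposition is about.

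The arc of Lemma~\ref{lem:arc} is centred on the vertex azimuth $\pi/N=L/2$, so its crossings are $\varphi_{0,1}=L/2\mp w$ with $\varphi_0+\varphi_1=L$. Hence $\theta_1=1-\theta_0$ and the offsets are confined to the anti-diagonal. The subsequence you select (one offset $\to0$, the other $\to\tfrac12$) therefore does not exist. Your description of the exceptional set as ``rational multiples of $L$'' is also wrong: the prism's configuration lies in it at every shell radius.

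The repair is to exploit the lock. With $\alpha_{0,1}=\pm(m^2-1)$, the two sawtooths combine into
\[ h\,(m^2-1)\bigl(e^{\mathrm iq\varphi_0}+e^{\mathrm iq\varphi_1}\bigr)\bigl(\tfrac12-\theta_0\bigr)=2h\,(m^2-1)\,e^{\mathrm iq\pi/N}\cos(qw)\bigl(\tfrac12-\theta_0\bigr)+\Order(h^2). \]
Only one-dimensional equidistribution of $\theta_0(N_\varphi)$ is then needed, i.e.\ irrational $\varphi_0/L$. A subsequence with $\theta_0\to0$ gives an error of exact order $1/N_\varphi$, and another with $\theta_0\to\tfrac12$ gives the oscillation.

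The true exceptional condition is $\cos(qw)=0$. For instance $q=N$, $w=\pi/(2N)$ is realized at $\rho=b/\cos(\pi/(2N))\in(b,R_c)$. There the $\Order(1/N_\varphi)$ term vanishes identically for every $N_\varphi$, and the rate is \emph{not} attained. So ``generic position'' must mean generic $\rho$ (hence generic $w$) at fixed $q$. The paper's $j=\pm1$ aliases carry the same factor $\cos(qw)$, so its argument implicitly assumes this too.
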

\begin{proof}
Both bounds are textbook. The upper bound is Koksma's inequality \citep{davis1984} applied to the
bounded-variation integrand $(\varepsilon(\varphi)-1)\,e^{\mathrm{i}q\varphi}$ (variation
$\Order(q)$ at fixed $q$). Sharpness follows from the aliasing identity
$\hat c_q^{N_\varphi}=\sum_{j}c_{q+jN_\varphi}$ with the closed-form boxcar coefficients
of Lemma~\ref{lem:arc}: the leading aliases $c_{q\pm N_\varphi}=\Order(1/N_\varphi)$ carry the
phases $e^{\pm\mathrm{i}N_\varphi\varphi_\ell}$, which do not cancel for generic $\varphi_\ell$,
so the error oscillates as $\sum_\ell\beta_\ell\sin(N_\varphi\varphi_\ell)/N_\varphi$ (the source
of the oscillation in the measured exponent) and improves only when a boundary lands on a node, the
azimuthal analogue of placing a panel breakpoint at a discontinuity.
\end{proof}

\subsection{The square-root structure}
\label{sec:zenith}

At fixed shell radius $r$, $\rho=r\sin\vartheta$ and the inscribed-cylinder boundary
crossing is at $\vartheta_b(r)=\arcsin(b/r)$. The half-width $\delta$ has an exact
form that exposes its branch.

\begin{lemma}[Half-width identity and analyticity]\label{lem:sqrt}
Write $x=1-b/\rho$. Then
\begin{equation}
  \label{eq:acosid}
  \delta=\acos(b/\rho)=\acos(1-x)=2\arcsin\!\sqrt{x/2}.
\end{equation}
Consequently $\delta=\sqrt{2x}\,\Phi(x)$ with $\Phi$ analytic and $\Phi(0)=1$; and since
$x(\vartheta)=1-b/(r\sin\vartheta)$ is analytic with a simple zero at $\vartheta_b$, the
half-width $\delta$, and hence every coefficient $c_q$ in \eqref{eq:arc}, is an analytic
function of $s:=\sqrt{\vartheta-\vartheta_b}$.
\end{lemma}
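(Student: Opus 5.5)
The plan has three steps: prove the trigonometric identity \eqref{eq:acosid}, factor out the square root by an odd-function argument, and compose the result with the analytic map $\vartheta\mapsto x(\vartheta)$ near $\vartheta_b$.

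\emph{The identity.} For $x\in[0,2]$ set $\alpha=\arcsin\sqrt{x/2}\in[0,\pi/2]$. Then $\cos(2\alpha)=1-2\sin^2\alpha=1-x$. Since $2\alpha\in[0,\pi]$, which is the range of $\acos$, we get $\acos(1-x)=2\alpha$. This gives \eqref{eq:acosid} on the physical range $b<\rho\le R_c$, where $0<x<1$.

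\emph{Factoring out $\sqrt{2x}$.} Let $F(u)=\arcsin(u)/u$. Because $\arcsin$ is odd and analytic on $|u|<1$, $F$ is an even analytic function with $F(0)=1$. So $F(u)=G(u^2)$ with $G$ analytic on $|w|<1$. Put $u=\sqrt{x/2}$. Then
$\delta=2u\,G(u^2)=\sqrt{2x}\,G(x/2)$, so we take $\Phi(x):=G(x/2)$. This $\Phi$ is analytic for $|x|<2$ and satisfies $\Phi(0)=1$. It continues analytically to a complex neighbourhood of $[0,1]$, which is the sense of ``analytic'' fixed at the start of this section.

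\emph{Composition with $\vartheta$.} The map $\vartheta\mapsto x(\vartheta)=1-b/(r\sin\vartheta)$ is analytic wherever $\sin\vartheta\neq0$. It vanishes at $\vartheta_b=\arcsin(b/r)$, with
$x'(\vartheta_b)=b\cos\vartheta_b/(r\sin^2\vartheta_b)=\cot\vartheta_b\neq0$ whenever $\vartheta_b<\pi/2$, i.e.\ $r>b$. Hence $x(\vartheta)=(\vartheta-\vartheta_b)\,k(\vartheta)$ with $k$ analytic and $k(\vartheta_b)\neq0$. Writing $\vartheta=\vartheta_b+s^2$ gives $x=s^2\,k(\vartheta_b+s^2)$ and
\[
\sqrt{2x}=s\,\sqrt{2k(\vartheta_b+s^2)} .
\]
The second factor is analytic in $s$ near $0$ because $k(\vartheta_b)>0$ lets us take an analytic branch of the square root. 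Therefore $\delta$ is an analytic function of $s$, and in fact odd in $s$ up to analytic reparametrisation. The other crossing $\vartheta=\pi-\vartheta_b$ is handled identically, with $s=\sqrt{(\pi-\vartheta_b)-\vartheta}$.

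\emph{Transfer to $c_q$.} By Lemma~\ref{lem:arc} and the arc geometry, the crossings are $\varphi_{0,1}=\varphi_v\mp w$ with $w=\pi/N-\delta$, where $\varphi_v$ is a fixed vertex angle. Hence \eqref{eq:arc} is a finite sum of the entire functions $e^{\mathrm{i}q(\varphi_v\mp(\pi/N-\delta))}$ of $\delta$. For $q=0$, $c_0$ is affine in $\delta$. Composing with the analytic $\delta(s)$ gives analyticity of $c_q$ in $s$.

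\emph{Main obstacle.} The main delicacy is the domain. The statement is local to the branch point, on the side $\vartheta>\vartheta_b$ where $b<\rho$. The closed form \eqref{eq:arc} holds only while $\rho<R_c$, i.e.\ up to the edge crossing $\vartheta=\arcsin(R_c/r)$ when $r>R_c$. Outside this band the contrast is constant in $\varphi$ and $c_q$ is trivially analytic. I would therefore state the claim on the panel $[\vartheta_b,\vartheta_e]$ whose endpoints are these breakpoints. I would also check that the complex neighbourhood stays inside $|x|<2$ and avoids the zeros of $\sin\vartheta$. The degenerate case $r=b$, where $\vartheta_b=\pi/2$ and $x$ has a double zero, must be excluded: there the branch becomes quartic, and this is precisely the radial face tangency treated separately.
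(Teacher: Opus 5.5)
Your proof is correct and follows the paper's route. It uses the same even-function factorization of $\arcsin(u)/u$ to obtain $\delta=\sqrt{2x}\,\Phi(x)$, the same simple-zero argument $x=(\vartheta-\vartheta_b)\,k(\vartheta)$ with $k(\vartheta_b)=x'(\vartheta_b)=\cot\vartheta_b\neq0$, and the same transfer to $c_q$ through its entire dependence on $\delta$.

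The only differences are minor. You verify \eqref{eq:acosid} with the double-angle formula $\cos 2\alpha=1-2\sin^2\alpha$ instead of the paper's substitutions $u=1-t$, $u=v^2$ in the integral for $\acos$. One aside is off but harmless, since you exclude that case: at $r=b$ one has $x=1-1/\sin\vartheta\le0$, so there is no partial-arc panel at all, not a ``quartic'' branch.
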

\begin{proof}
With $u=1-t$ and then $u=v^2$,
\[
  \begin{aligned}
  \acos(1-x)
   &=\int_{1-x}^{1}\frac{dt}{\sqrt{1-t^2}}
    =\int_{0}^{x}\frac{du}{\sqrt{u(2-u)}}\\
   &=2\int_{0}^{\sqrt x}\frac{dv}{\sqrt{2-v^2}}
    =2\arcsin\!\sqrt{x/2},
  \end{aligned}
\]
which is \eqref{eq:acosid}. Since $\arcsin(\zeta)/\zeta$ is even and analytic, we have
$\arcsin\sqrt{x/2}=\sqrt{x/2}\,\Psi(x)$ with $\Psi$ analytic, giving $\delta=\sqrt{2x}\,\Phi(x)$. Now
$x(\vartheta)$ is analytic and $x(\vartheta_b)=0$ with
$x'(\vartheta_b)=b\cos\vartheta_b/(r\sin^2\vartheta_b)\neq0$, so $x=s^2\,\chi(s^2)$ with
$\chi$ analytic and $\chi(0)=x'(\vartheta_b)\neq0$; hence $\delta=s\,\widetilde\Phi(s^2)$
is analytic in $s$. Finally, with the inside half-width $w=\pi/N-\delta$,
$c_q=(m^2-1)\,e^{\mathrm{i}q\pi/N}\,2\sin(qw)/q
=-(m^2-1)\,e^{\mathrm{i}q\pi/N}\cos(q\pi/N)\,2\sin(q\delta)/q$ for
$q\neq0$ (using $\sin(q\pi/N)=0$ for $q\in N\mathbb Z$), and $c_0=(m^2-1)(2\pi/N-2\delta)$;
$\sin(q\delta)$ is an analytic
function of the analytic $\delta(s)$, so $c_q$ is analytic in $s$.
\end{proof}

Spectral convergence requires analyticity on \emph{every} panel, so we first classify
all zenithal breakpoints; unlike the radial case (Proposition~\ref{prop:radial} below), here
$c_q(\vartheta)$ is known in closed form, so the classification is exhaustive.

\begin{lemma}[Zenithal breakpoints]\label{lem:zbreak}
At fixed $r$, $c_q(\vartheta)$ is real-analytic on $(0,\pi/2)$ except at the breakpoints
in $\{\vartheta_{\mathrm{cap}},\vartheta_b,\vartheta_{R_c}\}\cap(0,\pi/2)$,
\[
  \vartheta_{\mathrm{cap}}=\acos\!\big(\tfrac{h}{2r}\big),\qquad
  \vartheta_b=\arcsin(b/r),\qquad
  \vartheta_{R_c}=\arcsin(R_c/r),
\]
where it has respectively: \emph{(i)} a jump (the
slice crosses the cap plane $z=\pm h/2$); \emph{(ii)} a square-root branch (the
inscribed-cylinder tangency, Lemma~\ref{lem:sqrt}); \emph{(iii)} an analytic simple
zero (the circumradius crossing $\vartheta_{R_c}$, a \emph{zenithal} $C^0$ event, distinct from
the \emph{radial} edge tangency $r=R_c$ that produces the $(r-R_c)^{3/2}$ branch of
Proposition~\ref{prop:radial}). On each side of every breakpoint, and up to the
endpoints $0,\pi/2$, $c_q$ extends analytically (at $\vartheta_{R_c}$ the two one-sided
continuations differ ($\propto\sin(q\delta)$ from below, $\equiv0$ from above) but
share their value (both vanish) at the simple zero, a $C^0$ kink, so each one-sided panel's
integrand is analytic up to that endpoint, which is all Theorem~\ref{thm:zenith} requires).
\end{lemma}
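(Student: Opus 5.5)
The plan is to write $c_q(\vartheta)$ at fixed $r$ as an explicit piecewise closed form in $\vartheta$, using $\rho=r\sin\vartheta$ and $z=r\cos\vartheta$, and to read off where the pieces change. On $(0,\pi/2)$ the point $(\rho,z)$ of the slice lies inside the prism exactly when two conditions hold. The first is $z<h/2$, i.e.\ $\vartheta>\vartheta_{\mathrm{cap}}$. The second is that the azimuthal ring of radius $\rho$ meets the polygon. This gives three azimuthal regimes.
\begin{itemize}
\item[(a)] For $\rho<b$ the whole ring is inside, so $c_q=(m^2-1)\cdot 2\pi/N\cdot\delta_{q0}$, and the $q\neq0$ coefficients vanish since $q\in N\mathbb Z$.
\item[(b)] For $b<\rho<R_c$ the ring is a boxcar, and $c_q$ is given by the formula in the proof of Lemma~\ref{lem:sqrt}: $c_q=-(m^2-1)e^{\mathrm iq\pi/N}\cos(q\pi/N)\,2\sin(q\delta)/q$ and $c_0=(m^2-1)(2\pi/N-2\delta)$.
\item[(c)] For $\rho>R_c$ the ring lies entirely outside and $c_q\equiv0$.
\end{itemize}
The full coefficient is the regime formula multiplied by the indicator $\mathbf 1[\vartheta>\vartheta_{\mathrm{cap}}]$. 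Since $\sin\vartheta$ is strictly increasing on $(0,\pi/2)$, the regime boundaries $\rho=b$ and $\rho=R_c$ are exactly $\vartheta_b$ and $\vartheta_{R_c}$, whenever those lie in the interval.

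\emph{Analyticity away from breakpoints.} Within any open subinterval avoiding the three points, one fixed formula applies. Regimes (a) and (c) are constants. In regime (b), $\delta=\acos(b/(r\sin\vartheta))$ is analytic because $b/\rho\in(\cos(\pi/N),1)$ stays away from the branch points $\pm1$ of $\acos$. Composing with $\sin(q\cdot)$ preserves analyticity. Each formula is also analytic on a complex neighborhood of the closed panel except where $b/\rho=1$, which gives the one-sided extension claim at the endpoints $0$ and $\pi/2$ and at $\vartheta_{\mathrm{cap}}$.

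\emph{Classification.}
\begin{itemize}
\item[(i)] At $\vartheta_{\mathrm{cap}}$ the indicator jumps. The jump is nonzero whenever the regime value there is nonzero: always for $q=0$, and generically for $q\neq0$ when $\vartheta_{\mathrm{cap}}$ falls in regime (b).
\item[(ii)] At $\vartheta_b$, Lemma~\ref{lem:sqrt} gives $\delta=s\widetilde\Phi(s^2)$ with $\widetilde\Phi(0)\neq0$. Then $c_q$ is analytic in $s$ but has a nonzero odd-in-$s$ term ($c_0$ has $-2\delta$; $c_q$ has $\sin(q\delta)\approx q\delta$, times $\cos(q\pi/N)=\pm1$). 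So the singularity is a genuine square-root branch. From the side $\vartheta<\vartheta_b$ the coefficient is the constant of regime (a), which matches at $\delta=0$, so continuity holds.
\item[(iii)] At $\vartheta_{R_c}$ we have $\delta\to\pi/N$ and $\rho=R_c$, with $b/\rho=\cos(\pi/N)$ an interior point of $\acos$. Hence $\delta$ is analytic across it with $\delta'\neq0$, because $d\rho/d\vartheta=r\cos\vartheta\neq0$ and $\acos'\neq0$ there. Consequently $\sin(q\delta)=\sin(q\pi/N+q(\delta-\pi/N))$ has a simple zero, since $\sin(q\pi/N)=0$ and $\cos(q\pi/N)\neq0$. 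Likewise $2\pi/N-2\delta$ has a simple zero. From above the coefficient is identically $0$, so the two sides agree in value but not in derivative, which is a $C^0$ kink.
\end{itemize}

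The main obstacle is bookkeeping rather than analysis: making sure no further breakpoint hides in the composition. In particular one must check that the cap jump and the regime changes interact only through the product with the indicator. One must also handle degenerate orderings, e.g.\ $\vartheta_{\mathrm{cap}}$ coinciding with $\vartheta_b$, or $r<b$ so that $\vartheta_b$ is undefined, which is why the statement intersects the set with $(0,\pi/2)$. For the lower hemisphere the same argument applies by the symmetry $\vartheta\mapsto\pi-\vartheta$.
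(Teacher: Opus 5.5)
Your proposal is correct and follows essentially the same route as the paper's proof. Both rest on four facts: $\delta=\acos(b/\rho)$ is analytic on the lateral-arc regime because $b/\rho$ stays in $(\cos(\pi/N),1)$; the $\varphi$-independent cap condition $z<h/2$ gates the whole slice on or off and so produces the jump; Lemma~\ref{lem:sqrt} is the only locus where $b/\rho\to1$; and $\sin(q\delta)$ vanishes linearly at $\rho=R_c$. Your explicit three-regime piecewise formula multiplied by the cap indicator, and your checks that the jump and the odd-in-$s$ term are nonzero and that $\delta'\neq0$ at $\vartheta_{R_c}$, spell out details that the paper's compressed proof leaves implicit.
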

\begin{proof}
For $b<\rho<R_c$, $|z|<h/2$, one has $b/\rho\in(0,1)$, so $\delta=\acos(b/\rho)$
is analytic in $\vartheta$ and $c_q$ (a finite combination of $\sin(q\delta)$ and
constants, $q$ a multiple of $N$) is analytic. The slab constraint $|z|\le h/2$ is
independent of $\varphi$ ($z=r\cos\vartheta$), so crossing $z=\pm h/2$ gates the entire
slice on or off: $c_q$ jumps from $0$ to the lateral-arc value $c_q(\rho(\vartheta_{\mathrm{cap}}))$,
with no $\varphi$-structure contributed by the cap face itself, $\Rightarrow$ (i). Next,
(ii) is Lemma~\ref{lem:sqrt}, the only
locus where $b/\rho\to1$; at $\rho=R_c$, $b/\rho=\cos(\pi/N)$ is bounded away from $1$, so
$\delta$ remains analytic and $\sin(q\delta)\to\sin(q\pi/N)=0$ vanishes linearly,
giving (iii).
\end{proof}

\begin{theorem}[Spectral zenithal convergence]\label{thm:zenith}
Split $[0,\pi/2]$ at the breakpoints of Lemma~\ref{lem:zbreak} and apply the square-root
substitution
\begin{equation}
  \label{eq:sub}
  x=x_c+t^2,\qquad dx=2t\,dt
\end{equation}
(here $x=\vartheta$, $x_c=\vartheta_b$) on the inscribed-cylinder panel. On this mirror-folded
half-range each partial-arc panel carries a \emph{single} square-root endpoint, since the equator
$\vartheta=\pi/2$ is a symmetry boundary, not a branch, so one substitution per panel suffices.
(In the full-polar case of a particle without a horizontal mirror plane, a panel can be bounded by
two square-root tangencies, which needs the additional midpoint bisection of
Remark~\ref{rem:poly}.) Then on \emph{every} resulting panel the integrand is
analytic up to both endpoints, so it admits a Bernstein ellipse $E_\varrho$,
$\varrho>1$, and the $n$-point Gauss--Legendre error is $\Order(\varrho^{-2n})$.
\end{theorem}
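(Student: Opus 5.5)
The proof reduces the claim to the classical Bernstein-ellipse estimate for Gauss--Legendre quadrature. If $f$ is analytic inside $E_\varrho$ and bounded there by $M$, the $n$-point rule on $[-1,1]$ has error at most $\tfrac{64M}{15(\varrho^2-1)}\varrho^{-2n}$. Under the affine map to a panel $[a,c]$ this becomes $\Order(\varrho^{-2n})$. So the task is to show that on every panel created by the splitting and the substitution \eqref{eq:sub}, the transformed integrand extends analytically to a complex neighbourhood of the closed panel. Any such neighbourhood contains some $E_\varrho$ with $\varrho>1$, by compactness of the segment.

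First, the panels without substitution. Split $[0,\pi/2]$ at the points of $\{\vartheta_{\mathrm{cap}},\vartheta_b,\vartheta_{R_c}\}\cap(0,\pi/2)$. By Lemma~\ref{lem:zbreak}, on each open panel not having $\vartheta_b$ as an endpoint, $c_q$ is the restriction of a one-sided analytic function that continues across both endpoints.

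There are three kinds of such panels:
\begin{itemize}
\item \emph{Fully-inside or fully-outside panels.} Here $c_q$ is a constant (either $0$ or $(m^2-1)\cdot 2\pi/N$ for $q=0$).
\item \emph{The arc panel bounded by $\vartheta_{R_c}$.} On the side where $b<\rho<R_c$ we use the $\sin(q\delta)$ formula, in which $\delta=\acos(b/(r\sin\vartheta))$ is analytic because $b/\rho$ stays strictly inside $(0,1)$ up to and beyond the endpoint.
\item \emph{Panels ending at $\vartheta_{\mathrm{cap}}$.} Here one uses the one-sided value. The jump is only a switch between two analytic formulas, so each side continues analytically.
\end{itemize}
The weight $g_{pp'}(\vartheta)\sin\vartheta$ is a polynomial in $\cos\vartheta,\sin\vartheta$. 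The Wigner/associated-Legendre products $\pi_{mn},\tau_{mn},d^{\,n}_{0m}$ combine with the $\sin\vartheta$ factor into trigonometric polynomials, which are entire. The product is therefore analytic up to both endpoints.

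Next, the inscribed-cylinder panel, which has $\vartheta_b$ as one endpoint, say $[\vartheta_b,\vartheta_1]$ with $\vartheta_1$ the next breakpoint or $\pi/2$. Substitute $\vartheta=\vartheta_b+t^2$, $t\in[0,\sqrt{\vartheta_1-\vartheta_b}]$. The integrand becomes $2t\,g(\vartheta_b+t^2)\sin(\vartheta_b+t^2)\,c_q(\vartheta_b+t^2)$.

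By Lemma~\ref{lem:sqrt}, $c_q$ is an analytic function $F(s)$ of $s=\sqrt{\vartheta-\vartheta_b}$ on a neighbourhood of $s=0$, and here $s=t$. The other factors are entire in $t$. Away from $t=0$, in particular near the far endpoint, $c_q(\vartheta_b+t^2)$ is analytic in $t$ because $c_q$ is analytic in $\vartheta$ there (by the previous step) and $t\mapsto\vartheta_b+t^2$ is analytic.

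The point needing care is that the far endpoint must not carry a second square-root branch. This is the role of the mirror-fold remark in the statement. On $[0,\pi/2]$ the only locus with $b/\rho\to1$ is $\vartheta_b$: the $\vartheta\mapsto\pi-\vartheta$ partner lies outside the half-range, and $\pi/2$ is a symmetry boundary where $\rho=r$ and the integrand is simply analytic. Hence the two local analytic descriptions patch into one analytic function on a neighbourhood of the closed $t$-interval.

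If instead $\vartheta_b$ is the right endpoint of its panel (the side $\rho<b$), then $c_q$ is constant there and no substitution is needed. The substitution may be applied anyway: it is harmless because it preserves analyticity.

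The main obstacle is the patching step on the substituted panel. Lemma~\ref{lem:sqrt} gives analyticity in $s$ only on some neighbourhood of $s=0$. One must combine it with analyticity in $\vartheta$ on the rest of the panel to obtain a single complex neighbourhood of the whole real $t$-segment, and must ensure that the continuation of $\delta=\acos(b/\rho)$ avoids its other branch points, where $b/\rho=\pm1$ in the complex $\vartheta$-plane. I would handle this by noting that on the real segment the only such point is $t=0$, which Lemma~\ref{lem:sqrt} resolves. The set of complex $t$ at which $b/(r\sin(\vartheta_b+t^2))=\pm1$ or $\sin(\vartheta_b+t^2)=0$ is closed, apart from $t=0$, and disjoint from the segment. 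A thin tubular neighbourhood therefore works, and its width fixes $\varrho$. Summing the finitely many panel errors gives the total $\Order(\varrho^{-2n})$, with $\varrho$ the smallest panel value.
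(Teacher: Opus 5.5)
Your proof is correct and follows the same route as the paper. Both reduce the claim to the classical Bernstein-ellipse bound for Gauss--Legendre and then check panel by panel that the integrand extends analytically to a neighbourhood of the closed panel, using Lemma~\ref{lem:zbreak} at the jump and analytic-zero endpoints, Lemma~\ref{lem:sqrt} at $\vartheta_b$, and the fact that the weights and the Jacobian $2t$ are entire. Your explicit patching step, which joins the local $s$-analyticity near $t=0$ to the $\vartheta$-analyticity on the rest of the substituted panel, is a useful detail that the paper's proof leaves implicit.
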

\begin{proof}
By Lemma~\ref{lem:zbreak} each interior breakpoint is a jump, a square-root branch, or
an analytic zero, and splitting places each at a panel endpoint (the equator $\pi/2$ bounding the
last partial panel as a symmetry endpoint). At a jump or
analytic-zero endpoint $c_q$ already extends analytically (Lemma~\ref{lem:zbreak}); at
the square-root endpoint $\vartheta_b$ the substitution \eqref{eq:sub} makes $c_q$
analytic in $t$ (Lemma~\ref{lem:sqrt}). The weights $g_{pp'}$ (the associated
Legendre/trigonometric products of \eqref{eq:Uzenith}, analytic on each panel; any apparent pole
at $\vartheta=0,\pi$ is canceled by the $\sin^{|m|}\vartheta$ factors) and $\sin\vartheta$ are analytic in $t=
\sqrt{\vartheta-\vartheta_b}$ (or in $\vartheta$ on unsubstituted panels), and the
Jacobian $2t$ is entire. Hence on every panel the integrand is analytic on a complex
neighborhood of the closed interval, a Bernstein ellipse exists, and the classical
Gauss bound $\Order(\varrho^{-2n})$ applies \citep{trefethen2013,davis1984}.
\end{proof}

\begin{remark}[Uniformity in $r$ and $q$]\label{rem:unif}
The rate $\varrho$ and the constants are not claimed uniform: $\varrho\to1$ as a panel
collapses (e.g.\ $r\to b$, where $\vartheta_b\to\pi/2$) or as $q$ grows, and the
bounded-variation constant of Proposition~\ref{prop:fft} grows like $q$. The multipole
truncation bounds $q\le 2n_{\max}$, and away from the finitely many radii where two
breakpoints merge the panel lengths are bounded below; a bound uniform across the entire
radial quadrature is not pursued here.
\end{remark}

Without the substitution, the same split leaves a square-root endpoint and the rate is
only algebraic:
\begin{proposition}[Sharpness of the unsubstituted rate]\label{prop:gauss}
For $\int_0^c x^\alpha g(x)\,dx$ with $g$ analytic, $\alpha>-1$, and
$\alpha\notin\mathbb{Z}_{\ge0}$,
$n$-point Gauss--Legendre converges at $\Order(n^{-2\alpha-2})$
\citep{davis1984,trefethen2013}. For the unsubstituted inscribed-cylinder panel the
$\sqrt{\vartheta-\vartheta_b}$ endpoint gives $\alpha=\tfrac12$, hence $\Order(n^{-3})$.
\end{proposition}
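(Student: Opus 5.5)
The plan is to rescale to the canonical interval and then split the integrand into a pure singular monomial plus an analytic remainder. Set $x=\tfrac{c}{2}(1+y)$ with $y\in[-1,1]$, so that $\int_0^c x^\alpha g(x)\,dx=(c/2)^{\alpha+1}\int_{-1}^1(1+y)^\alpha\tilde g(y)\,dy$, where $\tilde g$ is analytic on a Bernstein ellipse $E_\varrho$. Expanding $\tilde g$ in its Taylor series about $y=-1$ up to degree $K$ (with $K$ chosen depending on the target order) writes the integrand as
\[
\sum_{j=0}^{K} a_j (1+y)^{\alpha+j} \;+\; (1+y)^{\alpha+K+1}r_K(y),
\]
with $r_K$ analytic near $[-1,1]$. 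Each term is handled by the same mechanism, and the leading term $j=0$ controls the rate.

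\textbf{Upper bound.} For the upper bound, I will use the classical estimate for the Gauss error in terms of regularity. The error functional annihilates polynomials of degree $2n-1$. For $f(y)=(1+y)^{\beta}$ with $\beta\notin\mathbb Z_{\ge0}$, the Chebyshev (or Legendre) coefficients decay like $k^{-2\beta-1}$, because the endpoint singularity $(1+\cos\theta)^\beta\sim\theta^{2\beta}$ is of order $2\beta$ in the angle variable. The Gauss error therefore behaves as $n^{-2\beta-2}$. The standard route is Trefethen's bound: if $f$ has $k$th derivative of bounded variation in the Chebyshev sense, the error is $\Order(n^{-k-1})$ up to an adjustment by one order for Gauss–Legendre. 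For noninteger smoothness I would rather cite the endpoint-singularity asymptotics of \citet{davis1984} (Navot/Lubinsky–Rabinowitz type). That result gives, for $\int_{-1}^1(1+y)^\alpha g$, an error of $C_\alpha g(-1)\,n^{-2\alpha-2}(1+o(1))$, where
\[
C_\alpha=\frac{\Gamma(\alpha+1)^2\sin(\pi\alpha)}{\pi}\,2^{2\alpha+2}\times(\text{const}).
\]
Terms with $j\ge1$ contribute $\Order(n^{-2\alpha-4})$. The remainder, having larger exponent, is treated likewise.

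\textbf{Main obstacle: sharpness.} The hardest step is showing that the rate is attained, i.e.\ that $C_\alpha\neq0$. This holds exactly when $\sin(\pi\alpha)\neq0$, which is the hypothesis $\alpha\notin\mathbb Z_{\ge0}$ (for $\alpha\in\mathbb Z_{\ge0}$ the monomial is a polynomial and convergence becomes geometric). To check the constant, I would compute the error for $(1+y)^\alpha$ using the Mehler–Heine asymptotics of the Gauss nodes near $-1$, namely $1+y_k\approx j_{0,k}^2/(2n^2)$ with weights $\approx 2j_{0,k}/(n^2 J_1(j_{0,k})^2)\cdot$const. This reduces the near-endpoint sum to a Bessel-type series whose discrepancy with the integral is nonzero for nonintegral $\alpha$. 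Rather than rederive this, I would invoke the cited result.

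\textbf{Application to the panel.} It remains to apply the general result to the unsubstituted inscribed-cylinder panel. By Lemma~\ref{lem:sqrt} and the proof of Theorem~\ref{thm:zenith}, the integrand $g_{pp'}c_q\sin\vartheta$ is analytic in $s=\sqrt{\vartheta-\vartheta_b}$. Splitting it into even and odd parts in $s$ gives $G_0(\vartheta-\vartheta_b)+(\vartheta-\vartheta_b)^{1/2}G_1(\vartheta-\vartheta_b)$ with $G_0,G_1$ analytic. The $G_0$ part converges geometrically. For the $G_1$ part, $\alpha=\tfrac12$ gives the rate $n^{-3}$, and the rate is sharp whenever $G_1(0)\neq0$. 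Lemma~\ref{lem:sqrt} shows that $\delta\sim s\sqrt{2\chi(0)}$ with $\chi(0)=x'(\vartheta_b)\neq0$, so $c_q$ has a nonzero $s$-linear coefficient $\propto\cos(q\pi/N)$. Hence $G_1(0)\neq0$ generically, provided the weight $g_{pp'}$ does not vanish at $\vartheta_b$.
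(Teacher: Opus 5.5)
Your proposal is correct and takes the paper's route. The paper gives no argument beyond citing the classical endpoint-singularity rate for Gauss--Legendre \citep{davis1984,trefethen2013} and reading off $\alpha=\tfrac12$ from the $\sqrt{\vartheta-\vartheta_b}$ structure of Lemma~\ref{lem:sqrt}, and in your version too the citation does the work, so two of your steps are dispensable: the Chebyshev-decay heuristic (which by itself yields at best $\Order(n^{-2\alpha})$, two orders short) and the explicit form of $C_\alpha$ (rely only on its being $\sin(\pi\alpha)$ times a nonvanishing factor). Two pieces go beyond the paper and are what make the $\Order(n^{-3})$ rate genuinely sharp rather than merely an upper bound: your even/odd split in $s$, which writes the panel integrand exactly as $G_0+(\vartheta-\vartheta_b)^{1/2}G_1$, and your check that $G_1(0)\propto(m^2-1)\cos(q\pi/N)\sqrt{2\chi(0)}\,g_{pp'}(\vartheta_b)\sin\vartheta_b\neq0$ generically.
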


\subsection{Geometric classification of the radial branches}
\label{sec:radial}

After exact angular integration the radial integrand of \eqref{eq:riccati} is
non-smooth at the radii $r_c$ where the integration sphere is tangent to a feature of
the particle. The branch exponent is fixed by the \emph{type} of the tangency, through
the local extremum of the polar boundary $R(\varphi)$ at the tangency azimuth.

\begin{proposition}[Branch classification of a contact tangency]\label{prop:radial}
Assume that at $\xi=0$ the integration sphere $r=r_c(1+\xi)$ makes a
\emph{non-degenerate} contact with the particle: a transversal tangency to one face or one
straight edge at an isolated point, or to one smooth boundary curve along which the local model
is uniform (e.g.\ the spheroid equator or the cylinder wall). (Several such contacts may share
the same radius, as the symmetric edges and faces of an $N$-fold particle do, in which
case their contributions simply add and the statement below characterizes each. Excluded are
only the degenerate coincidences where two loci \emph{merge at a single point}, vertices, and
non-transversal contacts; these form the higher-codimension case of \S\ref{sec:discussion} and
are not analyzed here.) In local coordinates
$\eta=\tfrac\pi2-\vartheta$ (the polar deviation from the equator) and $\varphi$ about the
tangency point, the angular integral obeys
\begin{equation}\label{eq:branchexp}
  I(r)=\mathcal A(\xi)+C\,\xi^{\sigma}+o(\xi^{\sigma}),
  \qquad \mathcal A\ \text{analytic at }0,
\end{equation}
where the exponent $\sigma$ is fixed by the local geometry of the polar boundary
$R(\varphi)$ at the tangency azimuth,
\[
  \sigma=\begin{cases}
    3/2 & \text{straight edge (prism vertical or top/bottom rim)},\\
    1   & \text{face, or a convex edge along a curve (linear pinch)},\\
    1/2 & \text{\emph{smooth} (osculating) tangency along a curve},
  \end{cases}
\]
and the leading coefficient $C$ is a \emph{local invariant} of the tangency
(determined by the corner slope and the value of the integrand there); for the
edge case $C=+\tfrac{8\sqrt2}{3\tau}\,w_0$ for the retained (inside) integral $I$, with
$\tau=\tan(\pi/N)$ and $w_0$ the integrand at the equatorial tangency (the sign flips for the
excluded integral). $C\neq0$ for generic observables (it may vanish for individual matrix
entries by symmetry).
\end{proposition}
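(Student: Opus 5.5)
The plan is to reduce every case to a single local area computation in the tangent coordinates $(\eta,\varphi)$ and read off $\sigma$ from how the sliver of sphere cut off by the particle scales with $\xi$. Write $I(r)=\int\!\!\int \chi_{\mathrm{in}}(r,\eta,\varphi)\,w(r,\eta,\varphi)\,\mathrm d\eta\,\mathrm d\varphi$, with $w$ the (analytic) VSWF product times the surface Jacobian. Choose a small fixed neighbourhood $\mathcal N$ of the tangency point and split $I=I_{\mathrm{far}}+I_{\mathcal N}$. Outside $\mathcal N$ the sphere meets the boundary transversally for $|\xi|$ small, by the non-degeneracy hypothesis. There the crossing loci move analytically with $r$, as in the proof of Lemma~\ref{lem:zbreak}, so $I_{\mathrm{far}}$ is analytic in $\xi$ and is absorbed into $\mathcal A$. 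Several contacts at the same radius are handled by a partition of unity, one neighbourhood per contact, so their singular parts simply add.

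Inside $\mathcal N$, I write the inside test through the polar boundary. For the edge case this is $r\cos\eta<R(\varphi)$, with the corner model $R(\varphi)=R_c(1-\tau|\varphi|+\Order(\varphi^2))$ and $r_c=R_c$. Expanding $r_c(1+\xi)(1-\eta^2/2+\dots)<R_c(1-\tau|\varphi|+\dots)$ shows that the excluded set is, to leading order, $\{\tau|\varphi|<\xi-\eta^2/2\}$. Its $w_0$-weighted area is
\[
w_0\,\frac{2}{\tau}\int_{-\sqrt{2\xi}}^{\sqrt{2\xi}}\Bigl(\xi-\frac{\eta^2}{2}\Bigr)\mathrm d\eta=\frac{8\sqrt2}{3\tau}\,w_0\,\xi^{3/2}.
\]
This gives $\sigma=3/2$ and the stated $C$; the signs of the retained and excluded integrals follow from $I_{\mathrm{in}}=I_{\mathrm{total}}-I_{\mathrm{out}}$, with $I_{\mathrm{total}}$ analytic, and from the orientation convention of the statement.

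The same computation handles the other cases. For a face, the excluded set is a cap $\{\eta^2/2+\varphi^2/2<\xi\}$ of area $2\pi\xi$, so it is linear and $\sigma=1$. For a convex edge along a curve (the cylinder rim), the model is uniform along the curve and the pinch is linear in the transverse variable, $\{\tau|\eta|<\xi\}$, again $\propto\xi$. For a smooth osculating tangency along a curve (the spheroid equator), uniformity along the curve leaves only the transverse condition $\kappa\eta^2<\xi$, whose width is $\propto\sqrt\xi$, so $\sigma=1/2$.

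To upgrade the leading term to the full form \eqref{eq:branchexp}, I rescale $\eta=\sqrt{\xi}\,u$ in the quadratic directions and $\varphi=\xi v$ in the corner or linear directions. Then I Taylor-expand $w$, the Jacobian and the higher-order terms of $R(\varphi)$ and $\cos\eta$. Each correction gains a factor $\xi^{1/2}$ or $\xi$, so $I_{\mathcal N}$ becomes an analytic part plus $\xi^{\sigma}$ times a power series in $\sqrt\xi$, which is $C\xi^{\sigma}+o(\xi^{\sigma})$. I would take the analytic part from the opposite side of the tangency, where no sliver exists, continued across $\xi=0$. Non-vanishing of $C$ for generic observables follows because $C$ is a nonzero geometric constant times $w_0$, and $w_0\ne0$ generically, vanishing only for symmetry-forbidden matrix entries.

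The main obstacle I expect is the remainder control: showing that the corrections really are $o(\xi^{\sigma})$, and that what is left over is analytic rather than merely smooth. This needs care where the sliver meets the edge of $\mathcal N$, and in the edge case where the curvature term $\Order(\varphi^2)$ of $R$ competes with the corner. I would first try to write the excluded sliver as $\{|v|<F(\xi,u)\}$ with $F$ analytic in $(\sqrt\xi,u)$, then integrate exactly in $v$ and apply the square-root substitution \eqref{eq:sub} in $u$ to exhibit analyticity in $\sqrt\xi$. Degenerate merges are excluded by hypothesis, so this local normal form should suffice.
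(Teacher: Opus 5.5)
Your overall route is the paper's own: you compute the excluded angular measure in the rescaled inner region $\eta\sim\sqrt\xi$, $\varphi\sim\xi$, with the outer region analytic. Your face, rim and osculating-curve cases agree with it.

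The edge case, however, contains a sign error, which the appeal to an ``orientation convention'' hides. From your own inside test $r\cos\eta<R(\varphi)$ with $R(\varphi)=R_c(1-\tau|\varphi|)$, one gets to leading order inside $\iff \tau|\varphi|<\tfrac12\eta^2-\xi$. The excluded set is therefore $\{\tfrac12\eta^2<\xi+\tau|\varphi|\}$. This is the union of the two flanking face caps, which are already present for $\xi<0$ and merge at $\xi=0$.

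The bounded sliver $\{\tfrac12\eta^2+\tau|\varphi|<\xi\}$ that you wrote is their \emph{intersection}. A set that is empty for $\xi<0$ and nucleates at a point is the picture at a \emph{minimum} of $R(\varphi)$ (a face, or a re-entrant corner). It is not the picture at the convex corner, which is a maximum of $R(\varphi)$.

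By inclusion--exclusion the overlap enters with a minus sign. Over $|\varphi|<\varphi_0$ the excluded measure is $\frac{8\sqrt2}{3\tau}\bigl[(\xi+\tau\varphi_0)^{3/2}-\xi_+^{3/2}\bigr]$, with $\xi_+=\max(\xi,0)$. Its non-analytic part is $-\frac{8\sqrt2}{3\tau}\xi_+^{3/2}$, so $I_{\mathrm{in}}=I_{\mathrm{total}}-I_{\mathrm{out}}$ acquires $+\frac{8\sqrt2}{3\tau}w_0\,\xi^{3/2}$, as the proposition states.

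Your magnitude coincides only because the overlap area equals minus that non-analytic term. Taken literally, your set and the same identity give $C=-\frac{8\sqrt2}{3\tau}w_0$ for the retained integral, which contradicts the proposition. Since the sign of $C$ is part of the claim, redo this step with the correct excluded set. Alternatively, write $\mathrm{excluded}=\mathrm{cap}_1+\mathrm{cap}_2-\mathrm{overlap}$, with each cap analytic across $\xi=0$ and the overlap vanishing for $\xi<0$.

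Your idea of continuing the analytic part from $\xi<0$ still works, but not because that side is sliver-free. On that side the excluded measure is present; it is simply analytic, equal to $\frac{8\sqrt2}{3\tau}(\xi+\tau\varphi_0)^{3/2}$.

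A smaller point concerns how you separate multiple contacts. To obtain $\mathcal A$ analytic rather than merely $C^\infty$, use disjoint sharp windows, as you already do for $\mathcal N$, not a smooth partition of unity. The moving boundary crosses the transition zone of a $C^\infty$ cutoff, so each near and far piece is then only $C^\infty$ in $\xi$.
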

\begin{proof}[Sketch]
Each exponent follows from the excluded angular measure in a rescaled inner region. A straight
edge has a linear corner $R(\varphi)=R_c(1-\tau|\varphi|)$, whose excluded measure
$\tfrac{8\sqrt2}{3\tau}[(\xi+\tau\varphi_0)^{3/2}-\xi^{3/2}]$ has non-analytic part
$-\tfrac{8\sqrt2}{3\tau}\xi^{3/2}$, giving $\sigma=\tfrac32$ and
$|C|=\tfrac{8\sqrt2}{3\tau}|w_0|$; a lateral face is a quadratic minimum, with excluded set an
ellipse of measure $\pi\sqrt{2/c}\,\xi$ ($\sigma=1$, a slope kink removed by splitting at
$r_c$); a smooth (osculating) tangency along a curve, the spheroid equator, gives an
equatorial band of measure $\Order(\xi^{1/2})$ ($\sigma=\tfrac12$, the $\sqrt{r-\rmin}$
branch). The rescaling that makes $C$ exact and bounds the $o(\xi^\sigma)$ remainder
(Remark~\ref{rem:status}) is given in \ref{app:branches}.
\end{proof}

\begin{remark}[Precise status]\label{rem:status}
Proposition~\ref{prop:radial} establishes the \emph{leading} non-analytic term: the
exponent $\sigma$ and the coefficient $C$ are exact. It does \emph{not} assert the full
non-analytic expansion (subleading terms, generically $\xi^{5/2}$, with a
possible $\xi^2\log\xi$ from inner/outer matching, are not computed),
and the $o(\xi^\sigma)$ remainder is controlled by the rescaling above rather than by
a fully written uniform bound. Only the leading term is needed here: it fixes both the
order cap $\min(\sigma\!+\!1,\,\text{scheme order})$ of any radial integrator and the
half-integer $\sigma$ that the substitution removes.
\end{remark}

All three exponents are integer or half-integer; the substitution \eqref{eq:sub},
$r=r_c+t^2$, maps $(r-r_c)^{k/2}\mapsto t^{k}$ and removes the branch. The face kink is
already removed by splitting at $r_c$.

\subsection{Radial integrator order}
\label{sec:radord}

The order cap is not the scalar-quadrature endpoint result but a statement about a
linear ODE \emph{system} with a branch in its coefficient, followed by the rational
reconstruction $\Tmat=\mathbf V\mathbf P^{-1}$; we isolate it.

\begin{lemma}[Order reduction at a coefficient branch]\label{lem:orderred}
Let $\tfrac{d}{dr}\mathbf y=\mathcal M(r)\mathbf y$ on $[r_c,r_c+\Lambda]$ with
$\mathcal M(r)=\mathcal M_0(r)+(r-r_c)^\sigma\mathcal M_1(r)$, $\mathcal M_0,\mathcal M_1$
analytic and $\sigma>0$ non-integer. A one-step method of classical order $p$ on a
uniform grid of spacing $h$ with $r_c$ as a node has global error
$\Order(h^{\min(p,\,\sigma+1)})$. If the represented quantity is reconstructed as
$\Tmat=\mathbf V\mathbf P^{-1}$ with $\mathbf P$ uniformly invertible on the trajectory,
the same order holds for $\Tmat$.
\end{lemma}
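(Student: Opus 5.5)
The plan is the standard local-error/propagation argument for one-step methods, with the local truncation error on the first step (or first few steps) after $r_c$ treated separately from the steps where $\mathcal M$ is analytic.

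\textbf{Step 1: local errors on smooth steps.} On any step $[r_j,r_{j+1}]$ with $r_j\ge r_c+h$, the coefficient $\mathcal M$ is smooth but not uniformly so. Its derivatives satisfy
\[
\mathcal M^{(k)}(r)=\Order\bigl((r-r_c)^{\sigma-k}\bigr)
\]
for $k>\sigma$. The classical Peano-kernel bound gives a local error
\[
\ell_j=\Order\Bigl(h^{p+1}\sup_{[r_j,r_{j+1}]}\bigl\|\mathbf y^{(p+1)}\bigr\|\Bigr),
\qquad \mathbf y^{(p+1)}=\Order\bigl(1+(r-r_c)^{\sigma-p}\bigr).
\]
Here the bound on $\mathbf y^{(p+1)}$ follows from differentiating $\mathbf y'=\mathcal M\mathbf y$ repeatedly and using that $\mathbf y$ is bounded.

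\textbf{Step 2: local error on the first step.} On the first step $[r_c,r_c+h]$, I would bound the error directly rather than through derivatives. The exact flow and the numerical step each reproduce the analytic part $\mathcal M_0$ to order $p$. The branch term contributes $\int_0^h s^\sigma\,ds=\Order(h^{\sigma+1})$ to both, possibly with different constants, so it need not cancel. Hence $\ell_0=\Order(h^{\min(p+1,\,\sigma+1)})$.

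\textbf{Step 3: summation and global error.} The one-step map is Lipschitz with constant $1+\Order(h)$ uniformly, since $\mathcal M$ is bounded. The discrete Gronwall inequality then gives the global error $\Order\bigl(\sum_j\|\ell_j\|\bigr)$. Writing $r_j-r_c=jh$ and using Step 1, the sum is
\[
\sum_{j\ge1}h^{p+1}\bigl(1+(jh)^{\sigma-p}\bigr)
=\Order(h^p)+h^{\sigma+1}\sum_{j\ge1}j^{\sigma-p}.
\]
For $\sigma<p-1$ the series $\sum_j j^{\sigma-p}$ converges, so this is $\Order(h^p+h^{\sigma+1})$. In the general case I would use the integral comparison $h^{\sigma+1}\sum_{j\le\Lambda/h} j^{\sigma-p}$, which is again $\Order(h^{\min(p,\sigma+1)})$; the boundary case $\sigma+1=p$ can pick up a logarithm, and I would track it. Adding the first-step contribution $\ell_0$ then yields the stated $\Order(h^{\min(p,\sigma+1)})$.

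\textbf{Step 4: reconstruction of $\Tmat$.} The map $(\mathbf V,\mathbf P)\mapsto\mathbf V\mathbf P^{-1}$ is Lipschitz on the set where $\|\mathbf P^{-1}\|$ is bounded. The perturbation identity
\[
\tilde{\mathbf V}\tilde{\mathbf P}^{-1}-\mathbf V\mathbf P^{-1}
=\bigl(\tilde{\mathbf V}-\mathbf V\bigr)\tilde{\mathbf P}^{-1}
+\mathbf V\,\tilde{\mathbf P}^{-1}\bigl(\mathbf P-\tilde{\mathbf P}\bigr)\mathbf P^{-1}
\]
applies once $h$ is small enough that $\tilde{\mathbf P}$ stays in a neighbourhood of the trajectory where it remains invertible. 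Thus the error in $\Tmat$ is of the same order as the error in $\mathbf y$.

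\textbf{Main obstacle.} I expect Step 1 to be the hardest. Making the bound $\mathbf y^{(p+1)}=\Order\bigl((r-r_c)^{\sigma-p}\bigr)$ rigorous requires a Leibniz/induction argument on the ODE. The method's local error also needs a Peano-type representation that involves only derivatives of $\mathbf y$ on the step (for Runge--Kutta methods this means derivatives of $\mathcal M$ at the stages). I would first try a Taylor expansion with integral remainder for a generic Runge--Kutta method.
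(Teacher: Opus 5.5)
Your proposal is correct and follows essentially the paper's argument: derivative growth $\mathbf y^{(p+1)}=\Order\bigl(1+(r-r_c)^{\sigma-p}\bigr)$ gives local errors $\Order\bigl(h^{p+1}(jh)^{\sigma-p}\bigr)$ on the steps $j\ge1$, the endpoint step is bounded separately by $\Order(h^{\sigma+1})$, the sum is split by whether $\sum_j j^{\sigma-p}$ converges, and Lipschitz continuity of $(\mathbf P,\mathbf V)\mapsto\mathbf V\mathbf P^{-1}$ transfers the order to $\Tmat$. The paper instead gets the derivative growth by writing $\mathbf y=\mathbf y_{\mathrm{an}}+(r-r_c)^{\sigma+1}\mathbf w$ rather than by your Leibniz induction, and leaves the Gronwall propagation implicit; also, the logarithmic borderline $\sigma+1=p$ you plan to track cannot occur, because $\sigma$ is non-integer and $p$ is an integer.
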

\begin{proof}[Sketch]
Integrating once, $\mathbf y=\mathbf y_{\mathrm{an}}+(r-r_c)^{\sigma+1}\mathbf w$ with
$\mathbf y_{\mathrm{an}}$ analytic, so $\mathbf y^{(k)}=\Order((r-r_c)^{\sigma+1-k})$; summing the
one-step local errors $\Order(h^{p+1}(jh)^{\sigma-p})$ over the grid gives global
$\Order(h^{\min(p,\sigma+1)})$, and the analytic reconstruction $\Tmat=\mathbf V\mathbf P^{-1}$
transfers the order to $\Tmat$. The summation, the borderline cases, and the reconstruction bound
are in \ref{app:radial}.
\end{proof}

\begin{proposition}\label{prop:radord}
The fixed-coefficient product recursion (Johnson/Bi) for \eqref{eq:riccati} has
global order $1$ (proved below); more generally, discretizations that preserve the
positivity/monotonicity structure of the Riccati flow are constrained to low order
(cf.\ \citet{dieci1996}). The linear-lift (M\"obius)
integrator $\Tmat=\mathbf V\mathbf P^{-1}$,
$\tfrac{d}{dr}[\mathbf P;\mathbf V]=\mathcal M(r)[\mathbf P;\mathbf V]$ with a
$p$-th order one-step method on the linear system has order $p$ for analytic
$\mathcal M$, but a $(r-r_c)^{\sigma}$ branch in $\mathcal M$ caps the realized order
at $\min(p,\sigma+1)$. Substitution \eqref{eq:sub} removes the leading half-integer $\sigma$,
restoring the scheme's design order, here the fourth order of the fourth-order Runge--Kutta
(RK4) lift used below, provided no
lower-regularity subleading term remains (Remark~\ref{rem:status}).
\end{proposition}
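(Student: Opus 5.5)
The proposition has four parts, and I would prove them in turn: first-order accuracy of the Johnson/Bi product recursion, order $p$ of the M\"obius lift for analytic $\mathcal M$, the cap $\min(p,\sigma+1)$, and its removal by \eqref{eq:sub}.

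\emph{Johnson/Bi recursion.} On a shell $[r_j,r_j+h]$ this recursion freezes $\Umat$ at one point, e.g.\ $\Umat(r_j)$ or the midpoint. It then composes the one-shell T-matrix update $\Tmat_{j+1}=\mathcal F_h(\Tmat_j)$ using $\mathbf J,\mathbf H$ evaluated at that point. I would expand $\mathcal F_h$ in $h$ and compare it with the Taylor step of \eqref{eq:riccati}.
\begin{itemize}
\item The $\Order(h)$ term reproduces the right-hand side of \eqref{eq:riccati}.
\item The $\Order(h^2)$ term does not match $\tfrac12\ddot\Tmat$. It omits the $r$-derivatives of $\mathbf J,\mathbf H,\Umat$ across the shell, and the shell-scattering factor is of the form $(\mathbf I-\mathbf Q\Tmat)^{-1}$ rather than the exact flow.
\item The local defect is therefore $\Order(h^2)$ with a generically nonzero coefficient. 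I would exhibit it explicitly on a scalar Mie-like test case, a single $n$ with constant contrast.
\end{itemize}
Lipschitz stability of $\mathcal F_h$ near the trajectory, which follows because the Riccati flow is bounded on $[\rmin,\rmax]$, then sums these defects to a global $\Order(h)$. The remark about positivity-preserving discretizations is cited from \citet{dieci1996}, not proved.

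\emph{M\"obius lift, analytic case.} I would verify directly that $\Tmat=\mathbf V\mathbf P^{-1}$ solves \eqref{eq:riccati} whenever $[\mathbf P;\mathbf V]$ solves the linear system with the block matrix $\mathcal M$ built from $\mathrm ik\,\mathbf J^\top\Umat\mathbf J$, $\mathbf J^\top\Umat\mathbf H$, $\mathbf H^\top\Umat\mathbf J$ and $\mathbf H^\top\Umat\mathbf H$. Differentiating $\mathbf V\mathbf P^{-1}$ and substituting gives \eqref{eq:riccati}. For analytic $\mathcal M$ the standard one-step theory gives global error $\Order(h^p)$ for $[\mathbf P;\mathbf V]$. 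Along the exact trajectory $\mathbf P$ is uniformly invertible, and the numerical $\mathbf P$ stays close to it for small $h$, so $\Tmat$ inherits the order through the reconstruction clause of Lemma~\ref{lem:orderred}.

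\emph{Branch cap.} By Proposition~\ref{prop:radial}, $\Umat(r)=\mathcal A(r)+C(r-r_c)^\sigma+\dots$ with $\sigma=3/2$ at an edge. Multiplying by the analytic $\mathbf J,\mathbf H$ gives $\mathcal M=\mathcal M_0+(r-r_c)^\sigma\mathcal M_1$, which is exactly the hypothesis of Lemma~\ref{lem:orderred}, so the order is $\min(p,\sigma+1)$.

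\emph{Substitution.} Under \eqref{eq:sub}, $r=r_c+t^2$, the system becomes $d\mathbf y/dt=2t\,\mathcal M(r_c+t^2)\mathbf y$. The leading branch term $t^{2\sigma}$ has $2\sigma=3$, an integer, so the new coefficient is analytic in $t$ up to subleading terms. RK4 in $t$ then attains order $4$, and the uniform grid in $t$ is equivalent to a graded grid in $r$. This step is valid only if every subleading term is also a half-integer power, which becomes analytic in $t$. The possible $\xi^2\log\xi$ term of Remark~\ref{rem:status} remains non-analytic, which is why the statement carries its proviso.

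The main obstacle is the order-1 claim for the product recursion. It requires pinning down that specific recursion and showing the $h^2$ defect coefficient is nonzero generically, not just that it is bounded. I would first try to settle this with the explicit scalar computation.
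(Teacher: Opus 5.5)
Your plan follows the paper's own argument in all four parts. These are: a frozen-coefficient $\Order(h^2)$ local defect summed to a global $\Order(h)$ for the product recursion; standard one-step theory on the linear lift plus the Lipschitz reconstruction $\Tmat=\mathbf V\mathbf P^{-1}$; Lemma~\ref{lem:orderred} for the cap; and $r=r_c+t^2$ turning $t^{2\sigma}$ into an integer power. On the obstacle you flag, the paper specifies that $\Umat$ is frozen at the node. That choice is what makes the omitted-derivative contribution a genuine $\Order(h^2)$ defect. With the midpoint freeze you also allow, the first-derivative terms of $\mathbf J,\mathbf H,\Umat$ cancel at that order, and only the resolvent structure could supply the defect.

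One claim in your last step is wrong, and it makes your conclusion weaker than the paper's. You say the substitution step is valid \emph{only if} every subleading term becomes analytic in $t$. On that basis you read the proviso as saying that a $\xi^2\log\xi$ term would defeat fourth order. RK4 needs finite smoothness, not analyticity, and the local-error summation in the proof of Lemma~\ref{lem:orderred} is exactly the tool that measures it. Under $r=r_c+t^2$ the term $\xi^2\log\xi$ becomes $\propto t^4\log t$ in $\mathcal M$. Including the Jacobian, it becomes $\propto t^5\log t$ in the $t$-coefficient $2t\,\mathcal M(r_c+t^2)$. That function is $C^4$ with an integrable, logarithmic fifth derivative, so the RK4 local errors still sum to $\Order(h^4)$, at worst up to a logarithmic factor. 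This is the paper's conclusion, given in \ref{app:radial} and in the discussion after the proposition. For the RK4 lift the substitution restores fourth order whether or not the log term is present; the proviso matters only for schemes of design order above four. As written, your argument makes the fourth-order claim conditional on a subleading term that does not in fact affect it.
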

\noindent The product recursion holds $\Umat$ fixed at the node and resums intra-shell
scattering through a resolvent; its local truncation error is $\Order(\Delta r^2)$
(coefficient held fixed $+$ operator non-commutativity), hence global $\Order(\Delta r)$.
The lift is a linear ODE, on which a $p$-th order Runge--Kutta or Magnus/Pad\'e step
\citep{schiff1999,hairer2006,doicu2019} has order $p$; the order reduction at a coefficient branch
$(r-r_c)^\sigma$ in $\mathcal M$ is Lemma~\ref{lem:orderred}, and \eqref{eq:sub} is its
remedy: mapping the \emph{leading} half-integer $\sigma$ to an integer power lifts the
coefficient regularity above the design order. For the fourth-order RK4 lift used here this
restores fourth order; a higher-order scheme would in addition require the lower-regularity
subleading terms to be absent (the possible $t^4\log t$ from inner/outer matching is discussed in
\ref{app:radial}). Here
$\sigma$ is the same branch exponent as in
Proposition~\ref{prop:radial}, since $\mathcal M\propto\Umat$ inherits the branch of the
angular integral $I(r)$; the half-integer cases ($\sigma=\tfrac12,\tfrac32$) are the
ones Lemma~\ref{lem:orderred} governs, whereas the integer face case $\sigma=1$ is an
ordinary kink removed by splitting the radial panel at $r_c$ (outside the
non-integer hypothesis of Lemma~\ref{lem:orderred}). The
M\"obius update $\Tmat\leftarrow(\Phi_{21}+\Phi_{22}\Tmat)(\Phi_{11}+\Phi_{12}\Tmat)^{-1}$
is well-defined while $\Phi_{11}+\Phi_{12}\Tmat$ stays nonsingular, i.e.\ away from
conjugate points of the Hamiltonian flow; for the physical T-matrix these occur only at
internal resonances, and per-step renormalization keeps the propagated subspace
well-conditioned in between. At large size parameter the internal resonances, hence the
conjugate points, grow denser, so a discrete step can approach a near-singular update even
away from a physical resonance; we do not characterize this robustness at large $k\rmax$ here.

\subsection{Generalization to convex polyhedra}
\label{sec:polyhedra}

The prism analysis used only the inradius $b$ and circumradius $R_c$, but the three
ingredients (closed-form arcs, zenithal breakpoints, and critical radii) are
properties of the \emph{contact geometry} and extend to any convex polyhedron. Write the
scatterer as an intersection of face half-spaces
$\Omega=\bigcap_f\{x:\mathbf n_f\!\cdot x\le d_f\}$, with outward unit normals
$\mathbf n_f=(\sin\alpha_f\cos\beta_f,\sin\alpha_f\sin\beta_f,\cos\alpha_f)$ and plane
offsets $d_f\ge0$. For a shell point $x=r\,\hat u(\vartheta,\varphi)$ the half-space
condition $\mathbf n_f\!\cdot x\le d_f$ becomes
\begin{equation}\label{eq:facecond}
  A_f(\vartheta)\,\cos(\varphi-\beta_f)\le B_f(r,\vartheta),\quad
  A_f=\sin\alpha_f\sin\vartheta,\quad B_f=\frac{d_f}{r}-\cos\alpha_f\cos\vartheta .
\end{equation}

\begin{proposition}[Convex-polyhedron breakpoints]\label{prop:poly}
Let $\Omega$ be a convex polyhedron, and assume each contributing face meets the integration
sphere transversally, with no sphere passing through a vertex or tangent to two faces \emph{at a
common point} (separated tangencies sharing the same radius, as in an $N$-fold particle,
are admissible and add; the excluded coincidences are the higher-codimension case of
\S\ref{sec:discussion}, not analyzed here). With \eqref{eq:facecond}:
\emph{(i) Azimuth.} A \emph{non-axial} face ($A_f\neq0$, i.e.\ $\alpha_f\notin\{0,\pi\}$) with
$|B_f/A_f|<1$ excludes the $\varphi$-arc of half-width $\psi_f=\acos(B_f/A_f)$ centered on
$\varphi=\beta_f$; the shell-circle interior is the complement, a finite union of arcs, with the
closed-form arc coefficients \eqref{eq:arc} over those arcs (Lemma~\ref{lem:arc} is the single
regular-prism case). An \emph{axial} face ($A_f=0$: a cap, $\alpha_f\in\{0,\pi\}$) imposes no
$\varphi$-constraint; it gates the whole circle on or off through the $z$-condition
$\cos\alpha_f\cos\vartheta\le d_f/r$, contributing the jump of (ii) rather than an arc.
\emph{(ii) Zenith.} At fixed $r$, $c_q(\vartheta)$ is real-analytic except at the
breakpoints formed by the face tangencies $\vartheta=\pm\alpha_f\pm\acos(d_f/r)$ and the
edge--sphere crossings (each edge line $\{\mathbf n_f\!\cdot x=d_f,\ \mathbf n_g\!\cdot
x=d_g\}$ met with $|x|=r$), retained only when the witness point lies on $\partial\Omega$;
at each it has a jump, a square-root branch, or an analytic zero as classified in
Lemma~\ref{lem:zbreak}. \emph{(iii) Radius.} The critical radii are $r=d_f$ (face-plane
tangency), $r=|v|$ (vertices $v$), and $r=\operatorname{dist}(O,\text{edge})$ (edge-line
tangencies). The branch exponents are unchanged from Proposition~\ref{prop:radial}: a face
contributes $\sigma=1$, a straight edge $\sigma=\tfrac32$.
\end{proposition}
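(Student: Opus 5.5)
The proof reduces each of the three claims to the half-space form \eqref{eq:facecond} and to results already established for the prism.

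\emph{(i) Azimuth.} Fix $r,\vartheta$ and regard \eqref{eq:facecond} as a condition on $\varphi$ alone. If $A_f>0$ and $|B_f/A_f|<1$, the violated set $\{\cos(\varphi-\beta_f)>B_f/A_f\}$ is exactly the open arc $|\varphi-\beta_f|<\acos(B_f/A_f)=\psi_f$. If $B_f/A_f\ge1$, no $\varphi$ is excluded; if $B_f/A_f\le-1$, the whole circle is excluded. The shell circle inside $\Omega$ is the intersection over $f$ of the arc complements, hence a finite union of arcs. The contrast is therefore piecewise constant on this union. Lemma~\ref{lem:arc}, whose proof only integrates a piecewise-constant function, then gives \eqref{eq:arc} over these arcs verbatim. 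For an axial face, $A_f=0$, and \eqref{eq:facecond} reduces to the $\varphi$-free condition $\cos\alpha_f\cos\vartheta\le d_f/r$. It therefore multiplies the whole coefficient by an indicator in $\vartheta$, exactly as the cap did in Lemma~\ref{lem:zbreak}(i).

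\emph{(ii) Zenith.} By (i), $c_q(\vartheta)$ is an analytic function (sums of $e^{\mathrm iq\varphi}$) of the arc endpoints $\beta_f\pm\psi_f(\vartheta)$. It can fail to be analytic only where one of the following happens:
\begin{itemize}
\item[(a)] some $\psi_f$ itself is non-analytic, which occurs only when $B_f/A_f\to\pm1$;
\item[(b)] the active set of arcs changes combinatorially, because two endpoints of different faces coincide;
\item[(c)] an axial gate switches.
\end{itemize}
For (a), write $B_f/A_f=\cos\psi$. The equation $B_f=\pm A_f$ reads $\cos(\vartheta\mp\alpha_f)=d_f/r$, giving the loci $\vartheta=\pm\alpha_f\pm\acos(d_f/r)$. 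These are the points where the plane of the face becomes tangent to the shell circle. Near such a point, $B_f/A_f=1-x$ with $x$ analytic and having a simple zero (by transversality). Lemma~\ref{lem:sqrt} applies unchanged: $\psi_f=2\arcsin\sqrt{x/2}$ is analytic in $\sqrt{\vartheta-\vartheta_*}$, which is a square-root branch. For (b), two arc endpoints of faces $f,g$ coincide exactly when the shell point lies on both planes, that is, on the edge line. The crossing is retained only if the witness point lies on $\partial\Omega$; otherwise the pair is masked by a third face and $c_q$ is unaffected. At a genuine edge crossing, the inside arc ends on $f$ on one side and on $g$ on the other. Each one-sided representation extends analytically, and the two agree at the crossing, so $c_q$ has either an analytic zero/$C^0$ kink, as at $\vartheta_{R_c}$ in Lemma~\ref{lem:zbreak}, or a jump if an arc is born or dies with nonzero length. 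For (c), an axial gate switching gives a jump. Away from these points every $\psi_f$ is analytic and the combinatorics are locally constant, so $c_q$ is analytic.

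\emph{(iii) Radius.} The critical radii are the radii at which the topology of $\Omega\cap\{|x|=r\}$ changes. By Morse theory for the distance function $|x|$ restricted to the strata of $\partial\Omega$, these are the critical points of $|x|$ on faces (foot point $d_f\mathbf n_f$, giving $r=d_f$), on edges (the foot of the perpendicular from $O$, giving $\operatorname{dist}(O,\text{edge})$), and on vertices (giving $|v|$). Each non-vertex contact is a non-degenerate tangency to one face or one straight edge, which is exactly the hypothesis of Proposition~\ref{prop:radial}. The local inner-region computation there depends only on the local model, quadratic minimum versus linear corner with some slope $\tau$, so $\sigma=1$ and $\sigma=3/2$ carry over with $C$ recomputed from the local corner slope.

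\emph{Main obstacle.} The hardest step is (ii)(b): the masking and witness-point bookkeeping, and showing that an edge crossing yields only a jump or a $C^0$ analytic-zero event and never a new branch. I would handle it with one-sided analytic continuation of each face's endpoint function and the transversality assumption, which rules out coincident tangency and edge events.
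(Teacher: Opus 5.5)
Your proposal is correct and follows essentially the same route as the paper's proof: (i) is read directly off the half-space form \eqref{eq:facecond}; the zenithal tangencies come from $B_f=\pm A_f$, i.e.\ $\cos(\vartheta\mp\alpha_f)=d_f/r$; edge-line crossings are filtered by the on-boundary witness test; and the critical radii inherit their branch exponents from the local face and edge models of Proposition~\ref{prop:radial}. Your one-sided-continuation treatment of edge crossings and your stratified critical-point enumeration in (iii) are more explicit versions of the paper's sketch. Two small refinements: apply the witness test to the face tangencies in (a) as well, since a masked face-plane tangency leaves $c_q$ analytic rather than producing a square-root branch. And in (iii) the relevant criterion is loss of transversality of the sphere to a stratum of $\partial\Omega$, not a change in the topology of $\Omega\cap\{|x|=r\}$; transversality is in fact what you enumerate.
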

\begin{proof}[Sketch]
Substituting $\hat u$ into $\mathbf n_f\!\cdot x\le d_f$ gives \eqref{eq:facecond}; the surviving
$\varphi$-set is the complement of the excluded arc $|\varphi-\beta_f|<\acos(B_f/A_f)$ (non-axial
face) or the whole circle gated by the $z$-condition $\cos\alpha_f\cos\vartheta\le d_f/r$ (axial
face), the zenithal tangencies sit where the circle's extreme value $\cos(\vartheta\mp\alpha_f)$
equals $d_f/r$, i.e.\ $\vartheta=\pm\alpha_f\pm\acos(d_f/r)$, and the critical radii are $r=d_f$,
$r=|v|$, and the edge-line distances; the branch types follow from the contact order as in
Lemma~\ref{lem:zbreak} and Proposition~\ref{prop:radial}. The full derivation, including the
on-boundary witness test that removes spurious face-plane tangencies, is in \ref{app:poly}.
\end{proof}

These loci reduce to the regular-prism breakpoints of \S\ref{sec:radial} for axial caps and
vertical sides, and we verified them against brute-force point-in-polyhedron tests on the
solid bullet. One subtlety (an equatorial-belt panel bounded by a square-root tangency at
\emph{both} ends, which a single substitution \eqref{eq:sub} cannot absorb and which we
therefore split at its midpoint before substituting) is recorded with these checks in
\ref{app:poly} (Remark~\ref{rem:poly}).

\section{Boundary-conformal integration scheme}
\label{sec:scheme}

The scheme is the conjunction of three ingredients, all driven by the
analytically known tangency loci:
\begin{enumerate}
\item \textbf{Azimuth:} replace the FFT by the closed-form arc coefficients
  $c_q$ \eqref{eq:arc} (Proposition~\ref{prop:poly}(i) for a general convex polyhedron).
\item \textbf{Zenith:} split the polar range at the breakpoints
  (Lemma~\ref{lem:zbreak}; Proposition~\ref{prop:poly}(ii)) and apply the substitution
  \eqref{eq:sub} on every $\sqrt{\,}$-branch panel. A particle with a horizontal mirror
  plane (prism, cylinder, spheroid) is integrated on $[0,\pi/2]$, the two hemispheres folded
  together by the up--down mirror (parity) symmetry; a particle without it (the bullet) is
  integrated on the full $[0,\pi]$
  with all polar breakpoints enumerated, including the equatorial double-branch split
  (Remark~\ref{rem:poly}).
\item \textbf{Radius:} integrate \eqref{eq:riccati} by the structure-preserving
  (M\"obius / linear-lift) scheme of Algorithm~\ref{alg:march} (following
  \citet{doicu2019,schiff1999}) on panels split at the critical radii, with the
  $t^2$ substitution at each half-integer critical radius.
\end{enumerate}

For the radius we integrate the \emph{linear lift} of the Riccati flow,
\begin{equation}
  \label{eq:lift}
  \Tmat=\mathbf V\mathbf P^{-1},\qquad
  \frac{d}{dr}\begin{bmatrix}\mathbf P\\\mathbf V\end{bmatrix}
    =\mathcal M(r)\begin{bmatrix}\mathbf P\\\mathbf V\end{bmatrix},\qquad
  \mathcal M=\mathrm{i}k\bigl[-\mathbf H^{\!\top};\,\mathbf J^{\!\top}\bigr]\,
            \Umat\,\bigl[\mathbf J\ \ \mathbf H\bigr],
\end{equation}
by the classical fourth-order Runge--Kutta (RK4) method introduced above. With $L$ the number of VSWF modes in the
symmetry block, the per-block T-matrix is $2L\times2L$ (two polarizations per mode) and its lift
$[\mathbf P;\mathbf V]$ propagates by a $4L\times4L$ matrix. Over each substep $[r,r+h]$ we form
the $4L\times4L$ step propagator $\Phi=\mathbf I+\tfrac h6(K_1+2K_2+2K_3+K_4)$ from the RK4 stages
$K_i$ of $\mathbf Y'=\mathcal M\mathbf Y$, $\mathbf Y(r)=\mathbf I$ (three evaluations of
$\mathcal M$, at $r,\,r+\tfrac h2,\,r+h$), and advance $\Tmat$ by the M\"obius update
$\Tmat\leftarrow(\Phi_{21}+\Phi_{22}\Tmat)(\Phi_{11}+\Phi_{12}\Tmat)^{-1}$
(Algorithm~\ref{alg:march}). \emph{Per-step renormalization} means the following: propagating
$\Tmat$ itself by the M\"obius map is equivalent to restarting the lift with
$\mathbf P=\mathbf I,\ \mathbf V=\Tmat$ at the start of every step, so we never form
$\mathbf V\mathbf P^{-1}$ with an exponentially ill-conditioned $\mathbf P$. The update is
well defined while $\Phi_{11}+\Phi_{12}\Tmat$ is nonsingular (away from conjugate points of the
Hamiltonian flow, i.e.\ internal resonances). This is stable where direct propagation of the
exponentially growing Neumann solutions is not. That a direct, positivity-preserving Riccati
recursion is constrained to low order parallels the \citet{dieci1996} monotonicity obstruction;
the lift circumvents it.

\begin{algorithm}[htbp]
\caption{Boundary-conformal IITM radial march (one $N$-fold symmetry block).}
\label{alg:march}
\begin{algorithmic}[1]
\State enumerate the critical radii $\rmin=r_0<\dots<r_M=\rmax$
       (Proposition~\ref{prop:poly}(iii)); set $\Tmat\gets\Tmat_{\mathrm{Mie}}(\rmin)$
\For{each radial panel $[r_j,r_{j+1}]$, $t^2$-mapped if an endpoint is a half-integer critical radius}
  \For{each uniform RK4 substep $[r,r+h]$ of the panel}
    \State assemble $\mathcal M_a=\mathcal M(r)$, $\mathcal M_b=\mathcal M(r+\tfrac h2)$,
           $\mathcal M_c=\mathcal M(r+h)$ from $\Umat$ \eqref{eq:lift} via the conformal angular
           quadrature \eqref{eq:Uzenith} (arc coefficients \eqref{eq:arc}; zenith split $+\,t^2$)
    \State $K_1\!=\!\mathcal M_a$, $K_2\!=\!\mathcal M_b(\mathbf I\!+\!\tfrac h2 K_1)$,
           $K_3\!=\!\mathcal M_b(\mathbf I\!+\!\tfrac h2 K_2)$, $K_4\!=\!\mathcal M_c(\mathbf I\!+\!hK_3)$;\;\;
           $\Phi\gets\mathbf I+\tfrac h6(K_1+2K_2+2K_3+K_4)$
           \Comment{RK4 for $\mathbf Y'=\mathcal M\mathbf Y$, $\mathbf Y(r)=\mathbf I$, over the $4L\times4L$ block}
    \State $\Tmat\gets(\Phi_{21}+\Phi_{22}\Tmat)(\Phi_{11}+\Phi_{12}\Tmat)^{-1}$
           \Comment{M\"obius update $\equiv$ reset $\mathbf P\!=\!\mathbf I,\,\mathbf V\!=\!\Tmat$}
  \EndFor
\EndFor
\end{algorithmic}
\end{algorithm}

\section{Numerical results}
\label{sec:results}

All experiments use the hexagonal prism ($R_c=1$, apothem $b=\tfrac{\sqrt3}{2}$,
height $h=1.5$, refractive index $m=1.5$, $k\rmax=1.25$); the prolate spheroid with equatorial semi-axis $1$ and polar semi-axis
$1.6$ (the smooth-radial control), and the finite cylinder of radius $a=1$ and height
$h=1$ ($m=1.5$; the faceted axisymmetric case). The size parameter is kept modest so the
reference solutions reach machine precision cheaply; the convergence \emph{orders}
reported here are fixed by the local contact geometry (Theorem~\ref{thm:zenith} and
Proposition~\ref{prop:radial}) and are $k\rmax$-independent over the range tested
($k\rmax\le20$, all three directions, Table~\ref{tab:krmax}). We therefore expect the same
\emph{convergence orders} on the large faceted particles that motivate the IITM, while the
resolution at which each direction enters its asymptotic regime
(Fig.~\ref{fig:zenithsize}) grows with size.
Convergence orders $p$ are least-squares fits of $\log(\text{error})$ versus $\log N$ over the
asymptotic tail of the tested resolution ladder: unless a figure states otherwise, the fit uses
the last four to five resolutions, and any pre-asymptotic low-resolution points are excluded (as
noted for the bullet in Fig.~\ref{fig:bullet}). Two distinct reference values are
used, and should not be confused: the \emph{order} figures measure each scheme's relative error
against a high-resolution sample of the \emph{same} scheme (whose own error is far below the
plotted range, so the measured slope is independent of this choice); the \emph{absolute}
cross-section value for the bullet (\S\ref{sec:results}, $C_{\mathrm{sca}}^\star$) is instead an
\emph{extrapolated} reference (Aitken/Richardson of the conformal sequence), cross-validated to
the independent plain IITM's own $\sim10^{-4}$ discretization floor. The angular and radial order figures (Figs.~\ref{fig:angular}--\ref{fig:radial})
are computed at reduced multipole truncations ($n_{\max}=4,6,8$ for the prism, spheroid, and
cylinder respectively) for speed; by the local regularity
analysis of \S\ref{sec:branches} the convergence \emph{order} of every direction is set by the
contact geometry and is $n_{\max}$-independent (Table~\ref{tab:krmax} verifies this directly up to
$k\rmax=20$, $n_{\max}=33$, in all three directions), so the reduced $n_{\max}$ affects only the
absolute error magnitudes, not the slopes. Each reported observable (here
$C_{\mathrm{sca}}$) is a smooth function of finitely many T-matrix entries, each
assembled from finitely many coefficients $c_q$ and one radial integration. Since none of
these finite operations improves the regularity of the least-regular factor, the observable's
convergence order is the minimum of the per-direction orders established in
\S\ref{sec:branches}, which is what the table records.

\begin{table}[htbp]
\centering
\caption{Convergence order of each integration direction, standard versus
boundary-conformal. ``exact'' denotes machine precision independent of the grid.
``Standard'' is the plain scheme with no breakpoint handling: an equidistant azimuth
rule, an unsplit zenithal Gauss rule (cap-jump dominated, $\Order(1/N)$), and the
first-order radial product recursion (Proposition~\ref{prop:radord}). Intermediate single-fix
schemes (zenithal split-only $\Order(N^{-3})$, Proposition~\ref{prop:gauss}; radial lift-only
$p\approx\confOrderPrismMobius$) appear in Figures~\ref{fig:angular}--\ref{fig:radial}.
All radial orders are least-squares fits; $^{\dagger}$ marks the
one entry whose fitted slope sits \emph{above} its theoretical order: the standard prism radial is
first order (Proposition~\ref{prop:radord}); the listed $\confOrderPrismProduct$ is a pre-asymptotic slope.}
\label{tab:orders}
{\small\setlength{\tabcolsep}{3pt}
\begin{tabular}{llll}
\toprule
Direction & Singularity at tangency & Standard & Boundary-conformal \\
\midrule
Azimuth ($N_\varphi$) & boxcar jump            & $\Order(N^{-1})$ & exact \\
Zenith ($N_\vartheta$) & $\sqrt{\vartheta-\vartheta_b}$ & $\Order(N^{-1})$ & spectral \\
Radius, prism ($N_r$) & kink $+\,(r-R_c)^{3/2}$ & $p\approx \confOrderPrismProduct^{\dagger}$ & $p\approx \confOrderFourth$ \\
Radius, spheroid ($N_r$) & $\sqrt{r-\rmin}$    & $p\approx \confOrderSpheroidNosub$ & $p\approx \confOrderFourth$ \\
Radius, cylinder ($N_r$) & $\sqrt{r-a}$ lateral & $p\approx \confOrderCylNosubOne$ & $p\approx \confOrderFourth$ \\
\bottomrule
\end{tabular}}
\end{table}

\begin{table}[htbp]
\centering
\caption{Per-direction boundary-conformal convergence for the hexagonal prism across size
parameter $k\rmax$ (the multipole truncation $n_{\max}$ grown with $k\rmax$ by the Mie rule
$n_{\max}\approx k\rmax+4(k\rmax)^{1/3}+2$). The azimuthal arc coefficients are exact at any size
by construction (Lemma~\ref{lem:arc}); the zenithal substitution stays spectral (relative error in
$C_{\mathrm{sca}}$ falling to $\sim$\num{e-10}); and the radial $t^2$ lift holds its fourth order
(Proposition~\ref{prop:radial}). At $k\rmax=20$ the coarsest radial nodes are still
pre-asymptotic (the full-ladder least-squares slope reads $\confKrmaxFullLadderSlope$, the asymptotic tail
($N_r\ge16$) $\approx4$), and the zenithal spectral onset likewise recedes with size
(Fig.~\ref{fig:zenithsize}, the high-$q$ regime of Remark~\ref{rem:unif}).}
\label{tab:krmax}
\begin{tabular}{lccccc}
\toprule
$k\rmax$ & $1.25$ & $2.5$ & $5$ & $10$ & $20$ \\
$n_{\max}$ & $8$ & $10$ & $14$ & $21$ & $33$ \\
\midrule
Azimuth & exact & exact & exact & exact & exact \\
Zenith & spectral & spectral & spectral & spectral & spectral \\
Radial order $p$ & $\confKrmaxOrderOnePtTwoFive$ & $\confKrmaxOrderTwoPtFive$ & $\confKrmaxOrderFive$ & $\confKrmaxOrderTen$ & ${\approx}4$ \\
\bottomrule
\end{tabular}
\end{table}

\begin{figure}[htbp]
  \centering
  \includegraphics[width=0.6\textwidth]{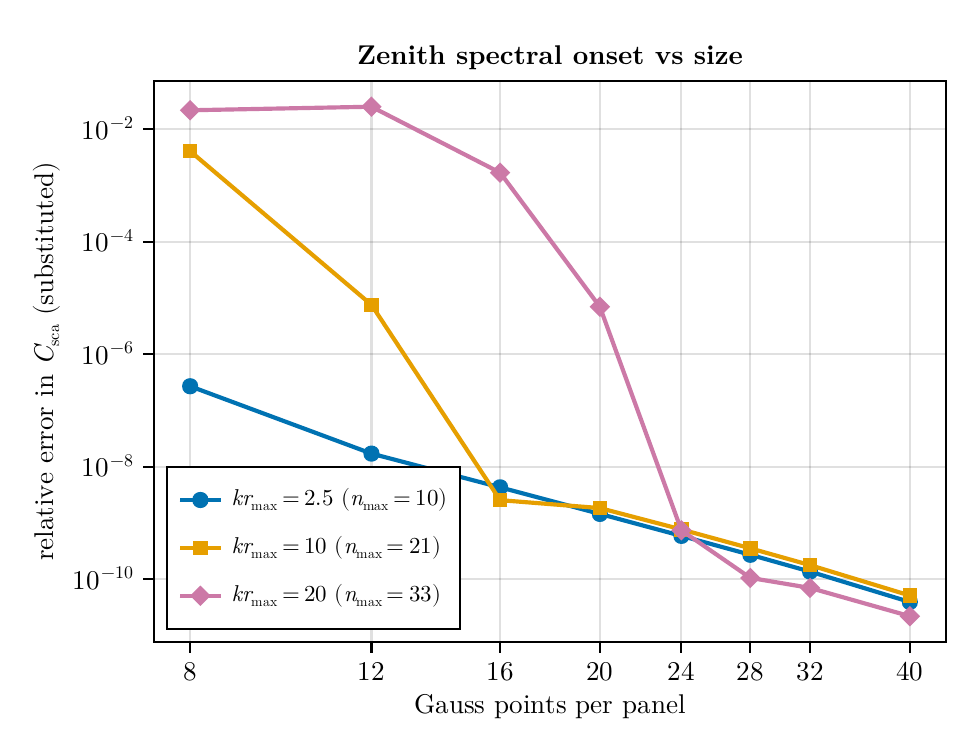}
  \caption{Size dependence of the substituted ($t^2$) zenithal quadrature (hexagonal prism). The
  convergence is spectral at \emph{every} size (all three curves reach the same
  $\sim$\num{e-10} floor), but the resolution at which it \emph{enters} the spectral regime
  recedes with $k\rmax$: the collapse moves from $\approx\confOnsetTwoPtFive$ Gauss points per panel at
  $k\rmax=2.5$ ($n_{\max}=10$), through $\approx\confOnsetTen$ at $k\rmax=10$ ($n_{\max}=21$), to
  $\approx\confOnsetTwenty$ at $k\rmax=20$ ($n_{\max}=33$), as the higher-$q$ Wigner-$d$ content
  (Remark~\ref{rem:unif}) demands more nodes. (``Onset'' = entry into the spectral regime, a
  coarser landmark than reaching a fixed error \emph{level} as in Fig.~\ref{fig:angular}.)}
  \label{fig:zenithsize}
\end{figure}

\begin{figure}[htbp]
  \centering
  \includegraphics[width=\textwidth]{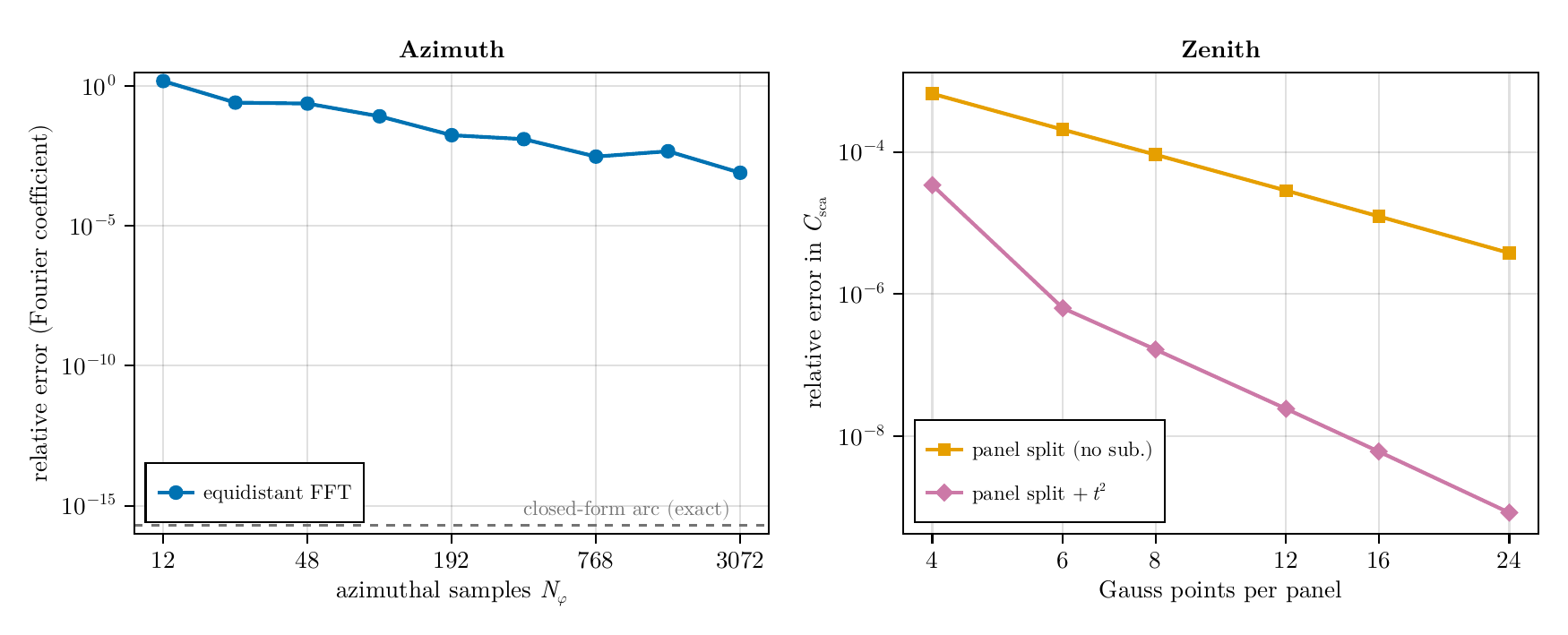}
  \caption{Angular quadrature convergence for the hexagonal prism (radial resolution
  fixed); the two panels are separate studies with independent axes. \emph{Azimuth} (abscissa $N_\varphi$; ordinate the relative error of
  the Fourier contrast coefficient): the equidistant-FFT coefficients converge as
  $\Order(1/N_\varphi)$ (the staircasing of a boxcar,
  Proposition~\ref{prop:fft}), while the closed-form arc coefficients
  (Lemma~\ref{lem:arc}) are exact (dashed floor at $\sim$\num{e-16}).
  \emph{Zenith} (abscissa Gauss points per panel; ordinate the relative error in
  $C_{\mathrm{sca}}$): panel-splitting at the inscribed-cylinder crossing alone gives
  only algebraic $\Order(N^{-3})$ (the square-root branch of Lemma~\ref{lem:sqrt},
  Proposition~\ref{prop:gauss}), whereas the substitution $\vartheta=\vartheta_b+t^2$
  (Theorem~\ref{thm:zenith}) restores spectral convergence, reaching $\sim$\num{e-9} by
  $\approx16$ nodes per panel and still falling geometrically.}
  \label{fig:angular}
\end{figure}

\begin{figure}[htbp]
  \centering
  \includegraphics[width=\textwidth]{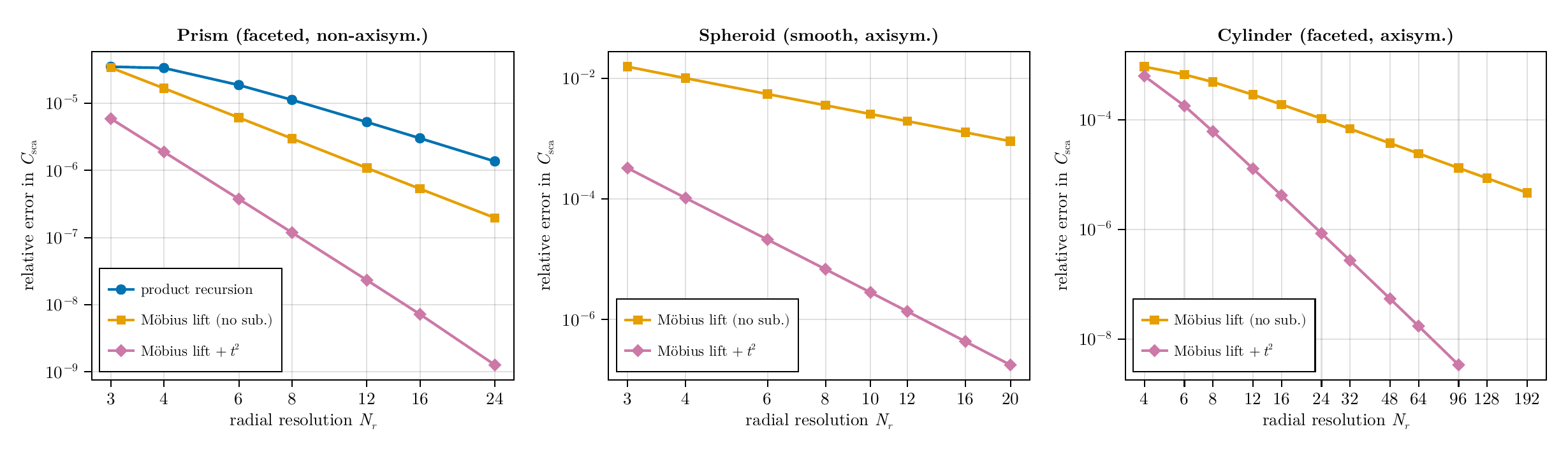}
  \caption{Radial convergence (all panels: relative error in $C_{\mathrm{sca}}$ versus
  the radial resolution $N_r$ at fixed, machine-precision angular quadrature). All
  schemes are compared at equal radial work ($N_r$ degrees of freedom); the M\"obius
  curves labeled ``no sub.'' are the linear-lift integrator \emph{without} the $t^2$
  substitution (not a different scheme). The three particles isolate the effect:
  faceted non-axisymmetric, smooth axisymmetric, and faceted axisymmetric.
  \emph{Prism:} the first-order product recursion (slope $\approx\confOrderPrismProduct$) is lifted by the
  M\"obius integrator to $\approx\confOrderPrismMobius$ (capped by the $(r-R_c)^{3/2}$ edge branch,
  Proposition~\ref{prop:radial}), and reaches fourth order once $r=r_c+t^2$ removes that
  branch. \emph{Spheroid (smooth):} the equatorial $\sqrt{r-\rmin}$ branch caps the
  unsubstituted lift at $\approx\confOrderSpheroidNosub$; the substitution restores fourth order.
  \emph{Cylinder:} the lateral-wall tangency at $r=a$ is a $\sqrt{r-a}$ branch (the
  axisymmetric counterpart of the spheroid equator), capping the unsubstituted lift at a clean
  $\confOrderCylNosubOne$; $r=a+t^2$ restores fourth order. Its caps and rim are $\sigma=1$ kinks, removed by
  panel splitting.}
  \label{fig:radial}
\end{figure}

Table~\ref{tab:orders} and Figures~\ref{fig:angular}--\ref{fig:radial} summarize the
main result: in every direction the
boundary-conformal treatment lifts the convergence from the low-order algebraic rates
of the standard scheme ($\Order(1/N)$ to $\Order(N^{-\confOrderPrismProduct})$) to exact, spectral, or
fourth-order. For the azimuthal coefficients the closed form
\eqref{eq:arc} matches a reference to $\sim$\num{e-16} while the FFT error decays as
$\Order(1/N_\varphi)$. For the zenithal integral the panel-split-plus-$t^2$ scheme
reaches $\sim$\num{e-9} by $\approx 16$ Gauss points per panel, where unsplit plain
Gauss--Legendre (not plotted) converges only at the erratic $\Order(1/N)$ staircasing,
$\sim$\num{\confZenithPlainSixForty} at $N_\vartheta=640$. For the radial
direction the matrix M\"obius integrator is first validated against the standard
product recursion (both converge to the same cross-section to $\sim$\num{\confMobiusProductAgree},
confirming the linearization), then shown to converge at order $\approx \confOrderPrismMobius$ on the
prism with panel splitting alone, the $(r-R_c)^{3/2}$ edge branch, and at order
$\approx \confOrderFourth$ once the $t^2$ substitution is added. On the smooth spheroid, where
only the equatorial $\sqrt{r-\rmin}$ branch is present, the order goes from $\confOrderSpheroidNosub$
(plain) to $\confOrderSpheroidSubTwo$ ($t^2$-substituted), confirming that the cap is the geometric
branch and not the integrator. The finite \emph{cylinder} (radius $a=1$, height
$h=1$) is the faceted axisymmetric case: its azimuthal integral is trivially exact, and the
radial integrand carries a single half-integer branch $\sqrt{r-a}$ at the lateral-wall
tangency $r=a$ (a \emph{smooth} (osculating) tangency along a circle, the axisymmetric
analogue of the spheroid equator, Proposition~\ref{prop:radial} curve case), while its
top/bottom caps ($r=h/2$) and circular rim ($r=\sqrt{a^2+(h/2)^2}$) are only $\sigma=1$ kinks,
the rim a convex \emph{edge} along a circle whose inside band pinches linearly. Wall, rim, and
the prism's isolated-point vertical edge thus realize the three excluded measures $\sqrt\xi$,
$\xi$, $\xi^{3/2}$ of \ref{app:branches} on one pair of shapes. Splitting alone gives a clean
order $\confOrderCylNosubOne$ (the local slope settles at $\confOrderCylNosubTwo$ across $N_r\in[24,192]$), and $r=a+t^2$
restores fourth order ($p\to\confOrderFourth$ by $N_r\approx96$), the conformal cylinder agreeing with the
package's independent IITM to within its discretization error.

\paragraph{Localization of the radial cap} A bisection of the radial range isolates
the order drop to the interval immediately above the lateral-edge tangency
$R_c$: the order is $\approx \confLocalOrderBelowRc$ up to $R_c$ and falls to $\approx \confLocalOrderAboveRc$ on entering
$(R_c, R_c+\confLocalWindowAboveRc)$, with no additional critical radius detected, consistent with a
$C^1$ branch (a $3/2$ power) at $R_c$ rather than a slope kink.

\paragraph{Aspect ratio} The locus of the substitution moves with the aspect ratio
(Fig.~\ref{fig:cases}), but the branch exponents and the restored order do not. The
equatorial $\sqrt{r-a}$ branch of a spheroid sits at the inner radius $\rmin=a$ (prolate,
$a<c$) or the outer radius $\rmax=a$ (oblate, $a>c$); the lateral $\sqrt{r-a}$ branch of a
cylinder is interior when flat ($a>h/2$) and coincides with $\rmin$ when tall ($a<h/2$); and
the prism's critical radii $\{b,\,R_c,\,h/2,\,\sqrt{b^2+(h/2)^2},\,\sqrt{R_c^2+(h/2)^2}\}$
reorder between a flat plate and a tall column, with $b,h/2$ face tangencies ($\sigma=1$,
split), both $R_c$ and $\sqrt{b^2+(h/2)^2}$ straight $\tfrac32$-edges, and
$\sqrt{R_c^2+(h/2)^2}$ the vertex (the higher-codimension corner of \S\ref{sec:discussion},
treated as a split point). The breakpoints are fixed analytically and enumerated within the
radial range, so the substitution lands without tuning: the restored order is the same across
orientations and aspect ratios (oblate spheroid $p\approx\confAspectOblateSpheroid$, tall cylinder $p\approx\confAspectTallCylinder$,
flat and tall prisms $p\approx\confAspectFlatPrism$ and $\confAspectTallPrism$), each validated against the package IITM, while
the unsubstituted scheme stays capped at the branch order ($\approx\confOrderCylNosubOne$ for the
$\tfrac12$-branch of the spheroid and cylinder, $\approx\confOrderPrismMobius$ for the prism's $\tfrac32$ edge).

\begin{figure}[htbp]
  \centering
  \includegraphics[width=\textwidth]{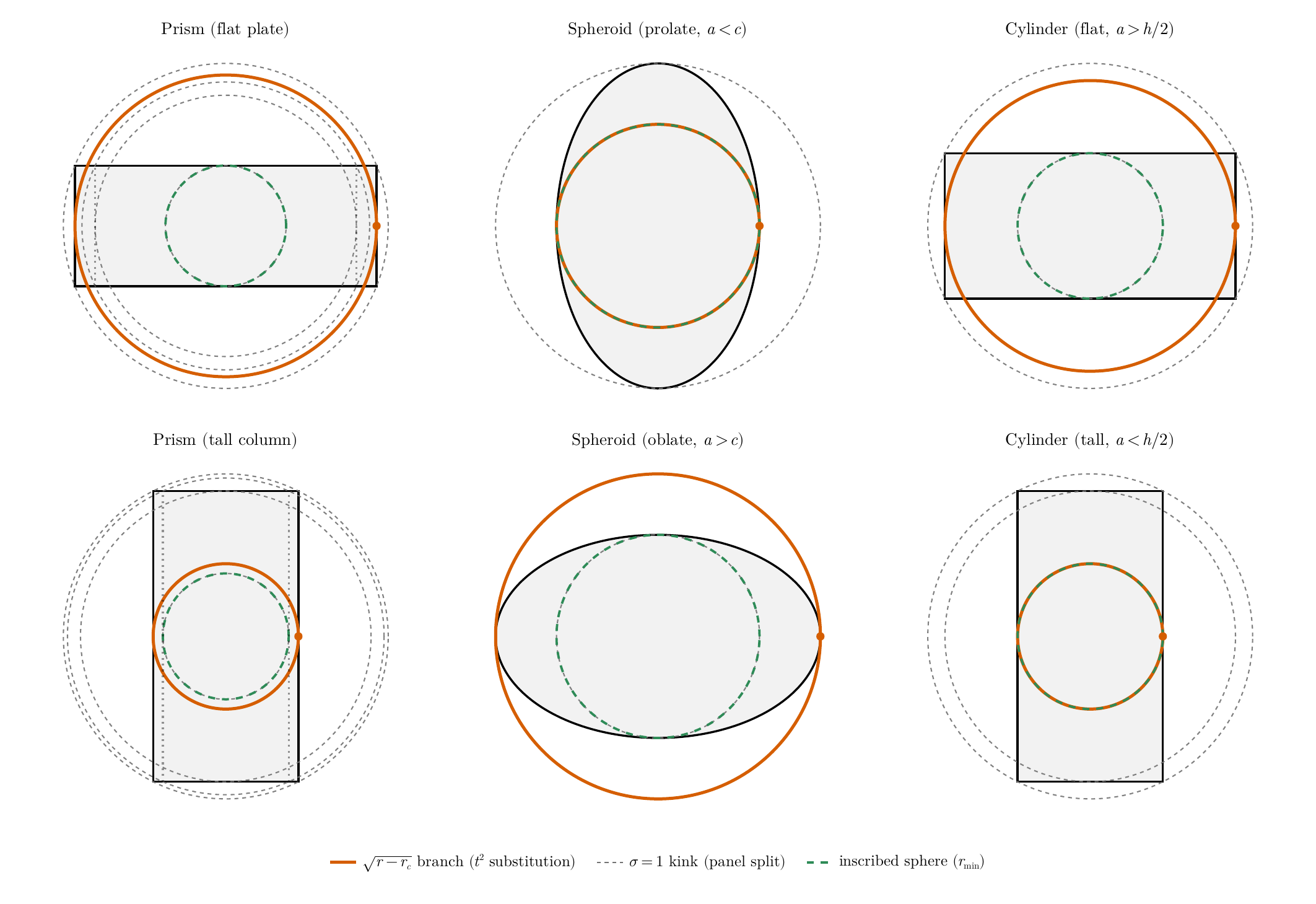}
  \caption{The substitution locus across shapes and aspect ratios. Each panel is a
  meridian cross-section with the nested integration-sphere radii at which the shell
  becomes tangent to a feature. The $\sqrt{r-r_c}$ branch absorbed by the $t^2$
  substitution (red) sits at the \emph{inner} radius $\rmin$ (prolate spheroid, tall
  cylinder), the \emph{outer} radius $\rmax$ (oblate spheroid), or an \emph{interior}
  radius (flat cylinder), and at the vertical-edge radius $R_c$ for the prism; the
  $\sigma=1$ kinks (grey, dashed) are removed by panel splitting, the inscribed sphere
  $\rmin$ is the green dashed circle, and the red dot marks the tangency point. The locus
  moves with the aspect ratio but is always fixed
  analytically by the geometry, so the scheme applies unchanged.}
  \label{fig:cases}
\end{figure}

\paragraph{Generalization to a faceted ice habit: the solid bullet}
The contact-geometry analysis of \S\ref{sec:polyhedra} holds for any convex polyhedron, so
the scheme extends from the regular prism to the pristine ice habits that motivate the IITM.
We demonstrate this on the \emph{solid hexagonal bullet}, a hexagonal column (edge
$a=1$, length $L_c=2$) capped by a hexagonal pyramid (height $h_p=1.5$, faces tilted at
$\alpha_p=\arctan(h_p/b)=60^\circ$), refractive index $m=1.311$ (ice in the near infrared,
\citep{warren2008}), $k\rmax=2.5$,
$n_{\max}=10$, which is convex but adds \emph{tilted} faces
($\alpha_f\notin\{0,\tfrac\pi2,\pi\}$), the feature the prism lacks. The solid bullet is the
pristine, convex idealization of the habit; the most common atmospheric bullets and columns have
\emph{hollow} ends \citep{schmitt2007hollow}, which are non-convex and therefore fall in the
deferred class of \S\ref{sec:discussion}. Two checks pin the
implementation: degenerating the pyramid ($h_p\to0$) reproduces the prism solver to
$\sim$\num{e-11}, and the converged cross-section agrees with the package's
\emph{independent} volume-grid IITM, which, erratic and edge-limited, brackets it to
its own $\sim$\num{e-4} discretization floor. The reference
$C_{\mathrm{sca}}^\star=\confCscaStar$ is an \emph{extrapolated} reference (Aitken/Richardson of the
conformal sequence; successive \emph{asymptotic} triples agree to $\lesssim10^{-7}$, its
self-consistency error bar), cross-validated against the independent IITM to that method's
$\sim$\num{e-4} floor.

With Proposition~\ref{prop:poly}'s breakpoints, both the \emph{column} inscribed-cylinder
branch and the \emph{tilted pyramid-face} branches take the substitution \eqref{eq:sub}
(splitting the equatorial double-branch panel at its midpoint, Remark~\ref{rem:poly}).
Figure~\ref{fig:bullet} shows the result on the tilted faces. The zenithal integral, only
$\Order(N^{-3})$ with panel splitting alone (Proposition~\ref{prop:gauss}; the unabsorbed
pyramid-face branches), becomes \emph{spectral} once substituted, reaching $\sim$\num{e-9}
by $\approx16$ Gauss points per panel (and $\sim$\num{e-10} by $\approx28$). The radial M\"obius lift recovers $p\approx\confOrderFourth$ with the
$t^2$ substitution, reaching $\sim$\num{e-8} at $N_r=32$. Without the substitution the order is an \emph{effective} $\approx\confOrderBulletNosub$, a finite-range mean drifting
toward the proven $\sigma+1=2.5$ rate; the supporting diagnostics (the substituted control's
clean integer fit, the small-coefficient half-integer fit, the cross-scheme floor, and the
comparison with the independent IITM's edge-limited floor) are collected in the Supplementary
Material. The direct evidence for the $\sigma+1=2.5$ cap is the prism panel
(Fig.~\ref{fig:radial}); on the tilted faces the bullet demonstrates that the substitution restores
fourth order, the cap there \emph{inferred} rather than directly displayed. The restored orders
match the prism's, so the generalization to tilted faces costs only the enumeration of the extra
tangency loci, not any loss of convergence order.

\begin{figure}[htbp]
  \centering
  \includegraphics[width=\textwidth]{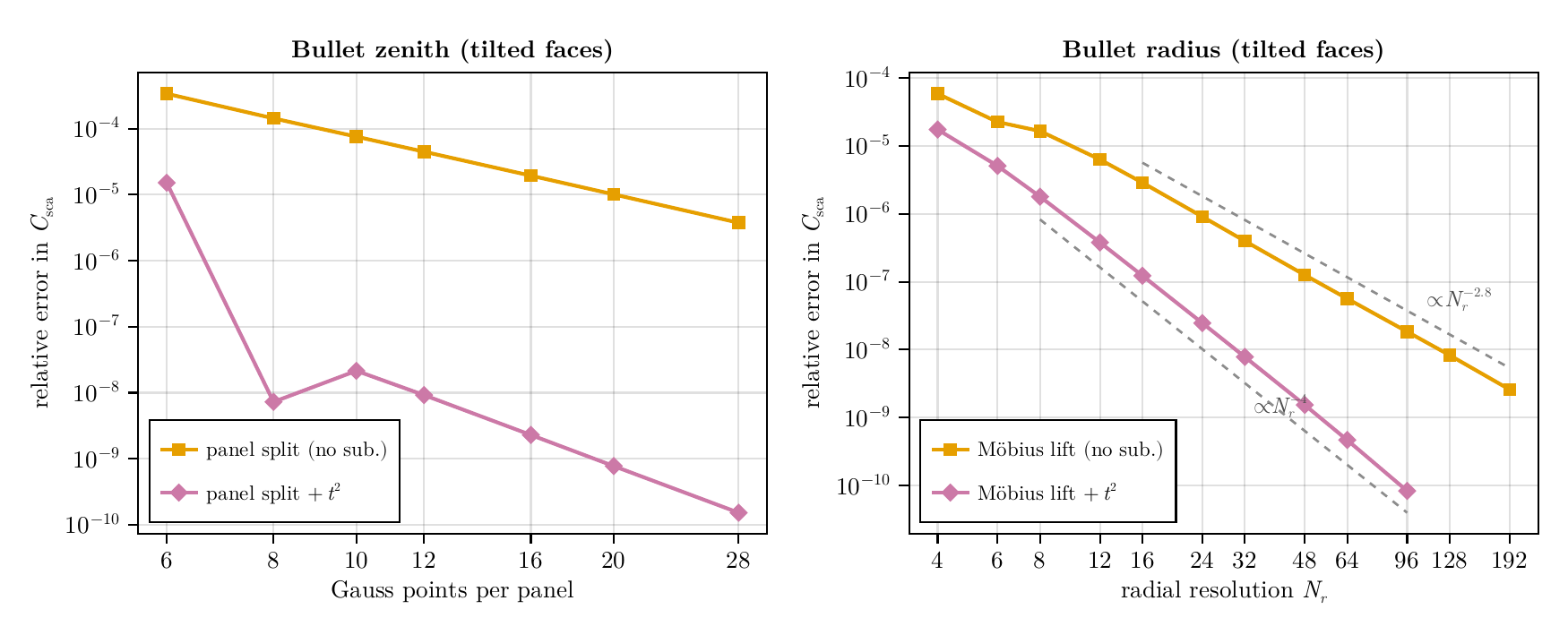}
  \caption{Boundary-conformal convergence on the solid hexagonal bullet (a convex ice habit
  with tilted pyramid faces; $a=1$, $L_c=2$, $h_p=1.5$, $m=1.311$, $k\rmax=2.5$,
  $n_{\max}=10$), relative error in $C_{\mathrm{sca}}$ against the extrapolated reference
  $C_{\mathrm{sca}}^\star$. \emph{Zenith} (Gauss points per panel): panel splitting alone leaves the tilted
  pyramid-face square-root branches unabsorbed and converges only algebraically
  ($\Order(N^{-3})$, Proposition~\ref{prop:gauss}); the substitution \eqref{eq:sub}, applied
  at the column \emph{and} pyramid tangencies (Proposition~\ref{prop:poly}, Remark~\ref{rem:poly}),
  restores spectral convergence. \emph{Radius} ($N_r$): the M\"obius lift reaches $p\approx\confOrderFourth$
  once $r=r_c+t^2$ absorbs the half-integer edge branches (lower dashed guide); without the
  substitution the order is an \emph{effective} $\approx\confOrderBulletNosub$ (upper dashed guide; the
  finite-range fit and diagnostics are in the Supplementary Material). Low-resolution points are pre-asymptotic
  (sign-oscillation in the zenith, accidental cancellation in the radius) and excluded from the
  order fits.}
  \label{fig:bullet}
\end{figure}

\section{Discussion}
\label{sec:discussion}

The practical message for IITM users is that the angular and radial grids do not
have to be refined to fight staircasing: the loci of non-smoothness are known
analytically (they are the tangency angles and radii), and once the integration is
made conformal to them the convergence is governed by the smooth parts of the
integrand. The unifying mechanism (jumps, kinks, and half-integer branches set by the tangency
geometry, removed respectively by exact arc integration, panel splitting, and the
square-root substitution) ties together the azimuthal, zenithal and radial
treatments that would otherwise look unrelated.

For the angular direction, the refinement over the interval-splitting of \citet{zhai2019} is the
branch-type identification: the crossing is a square-root branch (Lemma~\ref{lem:sqrt}), splitting
alone is $\Order(N^{-3})$ (Proposition~\ref{prop:gauss}), and the substitution restores spectral
convergence (Theorem~\ref{thm:zenith}). At the coarser accuracies of routine phase-function work the
angular cost is dominated by the oscillation of high-order Wigner-$d$ functions
($N_q\approx600$--$1000$, Remark~\ref{rem:unif}) and panel splitting alone suffices; the contrast
square-root branch becomes the binding constraint only at higher accuracy, where the substitution is
therefore the enabling step rather than an incremental speedup. A production-tuned, fully
co-converged end-to-end cost study is beyond the present scope: at fixed $n_{\max}$ a timing
comparison is physically informative only down to the multipole-truncation error; whatever
speedup accrues below it buys digits of the fixed-$n_{\max}$ limit, not of the physical
cross-section. A meaningful study must therefore co-converge the truncation
with the quadrature, and is left to the framework paper \citep{transitionmatrices}; the
present results establish the convergence \emph{orders}. To close the question of whether the
scheme carries any cost penalty, however, Table~\ref{tab:timing} gives an indicative
equal-accuracy wall-clock comparison on the solid bullet. At the plain scheme's best attainable
accuracy ($\sim$\num{2e-4}, its edge-limited floor) the boundary-conformal scheme already
reaches the same cross-section faster ($N_r=3$: \SI{0.22}{s} versus \SI{0.27}{s}) and two
orders more accurately ($\sim$\num{e-6}), continuing to accuracies the plain scheme cannot
reach at this truncation. The
comparison is indicative only: it is a single-thread, fixed-$n_{\max}$ measurement on one machine,
not the co-converged study above.

\begin{table}[htbp]
\centering
\caption{Indicative equal-accuracy wall-clock comparison on the solid hexagonal bullet
($k\rmax=2.5$, $n_{\max}=10$), relative error in $C_{\mathrm{sca}}$ against the extrapolated
reference $C_{\mathrm{sca}}^\star$. Single thread (one BLAS thread, single-process Julia),
best-of-many-samples timing on an Apple-silicon laptop; absolute times are machine-dependent and
the comparison is \emph{indicative}, not a production cost study. The plain scheme is
azimuth-resolution-insensitive here (edge-, not azimuth-limited), so its fastest converged azimuth
is used. At its best attainable accuracy the boundary-conformal scheme matches the plain scheme in
comparable time, then continues to accuracies the plain scheme does not reach at this truncation.
The wall-clock entries below are indicative single-thread timings on this machine and are not
regenerated by the figure pipeline.}
\label{tab:timing}
{\small
\begin{tabular}{lll}
\toprule
Scheme (resolution) & Relative error & Wall-clock (s) \\
\midrule
Plain IITM ($N_r,N_\vartheta,N_\varphi=80,60,96$) & $2.2\times10^{-4}$ & $0.27$ \\
Boundary-conformal ($N_r=3$)                       & $2.7\times10^{-6}$ & $0.22$ \\
Boundary-conformal ($N_r=8$)                       & $1.8\times10^{-6}$ & $0.86$ \\
\bottomrule
\end{tabular}}
\end{table}

On the radial side the high-order recurrence is \citet{doicu2019}'s; ours is the observation
that faceted particles carry half-integer branches that cap it at order $\sigma+1$ with no
visible warning, and the substitution that restores the design order (\S\ref{sec:radord}). The remaining
limitations are twofold. The first is the higher-codimension coincidences where two tangency
loci merge (the sphere passing through a particle corner); these occur at isolated radii and
are not analyzed here; in the tested geometries we treat them as split points, and they do
not control the observed order. The second is the extension to \emph{non-convex} habits
(hollow columns, bullet rosettes, and aggregates), whose re-entrant boundaries break the
half-space inside-test of \S\ref{sec:polyhedra} and would require ray-cast parity or a convex
decomposition. The
convex case, the pristine prismatic habits (column, plate, droxtal, solid bullet), is
covered: each face contributes its own analytic arcs and tangency branches through
Proposition~\ref{prop:poly}, demonstrated here on the bullet.

\paragraph{Refractive index and absorption} The contrast enters every branch coefficient
only through the scalar prefactor $m^2-1$, whereas the branch exponents
($\sigma=\tfrac32,1,\tfrac12$) and the restored orders are fixed by the contact geometry
alone (Proposition~\ref{prop:radial}). A complex (absorbing) index makes $m^2-1$ complex
without altering either the branch structure or the restored orders, so the classification
and the $t^2$ substitution carry over verbatim to absorbing habits. The faceted examples
here use real indices (the ice value $m=1.311$ is essentially lossless in the near
infrared); this is the conservative choice, since absorption damps the interior field and
only eases the conditioning of the contrast quadrature.

\paragraph{Generality} Coordinate transformations that cancel singularities are a long-standing
device in computational integral equations (the self-adaptive transformation of
\citet{telles1987} and the vertex-cancelling map of \citet{duffy1982} are the canonical
boundary-element examples), and the $t^2$ substitution used here is the simplest such map. What
is specific to the present work is not the substitution but the \emph{a priori} contact-geometry
classification that tells one, in closed form, which branch sits where. That classification is a
property of the \emph{contrast quadrature over a faceted scatterer}, not of the imbedding
recursion: the same sharp-edge tangency arises in the surface integrals of the extended
boundary condition method, where \citet{kahnert2001} likewise had to treat the edge
discontinuity by adapting the surface integration. We therefore expect the
contact-geometry classification (Proposition~\ref{prop:radial}) and the square-root
substitution to carry over to EBCM-type surface quadratures for sharp-edged particles,
though the present proofs are specific to the IITM moving-shell contrast integral, and EBCM
surface integrals additionally carry the physical Meixner edge singularity (a different
analytic object from the bounded, piecewise-constant contrast), so the carry-over requires
a separate analysis. The radial component, in
turn, is an instance of structure-preserving integration of an invariant-imbedding
matrix Riccati equation; its scalar counterpart is the variable-phase
(phase-function) method of potential scattering \citep{calogero1967}, and the same
linear lift carries to transfer- and scattering-matrix layer recurrences. The IITM
results are thus the worked example of a broader principle: half-integer
geometric-contact branches in moving-boundary quadratures and in layer recursions,
removed by a power substitution to recover high-order convergence.

\section{Conclusion}
\label{sec:conclusion}

We have shown that IITM staircasing for faceted particles is a manifestation, in all
three integration directions, of jumps, kinks, and half-integer branches induced by the
tangency geometry of the integration sphere against the faces and edges of the particle.
A boundary-conformal integration scheme (closed-form azimuthal arcs, panel
splitting at the analytic tangency loci, and a square-root substitution that removes the
half-integer branches) restores exact, spectral, or fourth-order convergence on a
hexagonal prism, a finite cylinder, a prolate spheroid, and, through a convex-polyhedron
enumeration of the tangency loci, the solid hexagonal bullet, a faceted ice habit with
tilted faces. The scheme turns the geometry of the particle from the source of the error
into the data that removes it.

\section*{Data availability}
The boundary-conformal scheme builds on the open-source
\texttt{Transition\allowbreak Matrices.jl} package \citep{transitionmatrices}, whose baseline
IITM and EBCM solvers it extends. A reference
implementation of the conformal solvers (shared library \texttt{src/}) and the scripts
that regenerate every figure (directory \texttt{conformal/}) are available at
\url{https://github.com/JuliaRemoteSensing/iitm-convergence}
(release \texttt{conformal-v1.0}); the repository pins the computational environment
(\texttt{Project.toml}/\texttt{Manifest.toml}, Julia~1.12) at that tag. The conformal solvers there
are self-contained; for direct comparison with \eqref{eq:arc}, the implementation uses
the conjugate azimuthal convention ($e^{-\mathrm iq\varphi}$, same $q=m'-m$), so its Fourier
coefficients are the complex conjugates of \eqref{eq:arc} and the cross-sections are identical.

\appendix

\section{Detailed local branch derivations}
\label{app:branches}
This appendix gives the leading-order local contact model behind
Proposition~\ref{prop:radial}: the exponent $\sigma$ and coefficient $C$ are exact, the remainder
is controlled by the rescaling below (Remark~\ref{rem:status}), and the degenerate cases are
excluded by hypothesis.

\paragraph{Edge case (model computation)} At a vertical edge ($N$-gon vertex) the polar
boundary has a linear corner, $R(\varphi)=R_c(1-\tau|\varphi|)$. With $\rho=r\cos\eta$,
$\cos\eta=1-\tfrac12 \eta^2+\Order(\eta^4)$, and $r=R_c(1+\xi)$, the inside condition
$\rho\le R(\varphi)$ reduces to $\tfrac12 \eta^2-\xi\ge\tau|\varphi|$, and the
region that leaves the particle for small $\xi>0$ is
$\{\tfrac12 \eta^2<\xi+\tau|\varphi|\}$, of measure
\[
  A(\xi)=\!\int_{-\varphi_0}^{\varphi_0}\!\!2\sqrt{2(\xi+\tau|\varphi|)}\,d\varphi
   =\frac{8\sqrt2}{3\tau}\Bigl[(\xi+\tau\varphi_0)^{3/2}-\xi^{3/2}\Bigr].
\]
Here $(\xi+\tau\varphi_0)^{3/2}$ is analytic in $\xi$ (binomial series,
$\tau\varphi_0$ bounded away from $0$) and the cutoff-independent non-analytic part is
exactly $-\tfrac{8\sqrt2}{3\tau}\xi^{3/2}$.

\paragraph{Corrections (rescaling)} The non-analytic part is generated in the inner region
$\eta=\Order(\sqrt\xi)$, $\varphi=\Order(\xi)$. Rescaling
$\eta=\sqrt{2\xi}\,\alpha$, $\varphi=(\xi/\tau)\,\beta$, the exact boundary
$r\cos\eta=R(\varphi)$ becomes $\alpha^2=1+|\beta|+\Order(\xi)$, the measure
$\sin\vartheta\,d\eta\,d\varphi=(\sqrt2/\tau)\,\xi^{3/2}\,(1+\Order(\xi))\,
d\alpha\,d\beta$, and the integrand is $w_0+\Order(\sqrt\xi)$. Thus the neglected
boundary curvature $\Order(\varphi^2)$, the $\cos\eta$ and $\sin\vartheta$ tails, and the
integrand variation each carry an extra positive power of $\xi$ and contribute
at $o(\xi^{3/2})$, while the outer region $\varphi=\Order(1)$ is analytic in
$\xi$. The excluded measure $A(\xi)$ thus carries the non-analytic part
$-\tfrac{8\sqrt2}{3\tau}\xi^{3/2}$; the retained integral $I=\int_{\mathrm{inside}}w$
loses it, so $I$ acquires $+\tfrac{8\sqrt2}{3\tau}w_0\,\xi^{3/2}$. This gives
\eqref{eq:branchexp} with $\sigma=\tfrac32$ and $|C|=\tfrac{8\sqrt2}{3\tau}|w_0|$.

\paragraph{Face case} A lateral-face tangency replaces the linear corner by a smooth
quadratic minimum of the polar boundary, $R(\varphi)=b(1+c\varphi^2)$, $c>0$. The
inside condition becomes $\tfrac12 \eta^2+c\varphi^2\ge\xi$, so the excluded set
$\{\tfrac12 \eta^2+c\varphi^2<\xi\}$ is an ellipse of measure
$\pi\sqrt{2/c}\,\xi$ for $\xi>0$ and $0$ for $\xi<0$: each one-sided
branch is analytic, but the two join with a slope discontinuity at $\xi=0$, hence
$\sigma=1$ (a slope kink, removed by splitting at $r_c$).

\paragraph{Curve-tangency case} For a \emph{smooth} (osculating) tangency along a curve (the
prolate spheroid's equator, with $r_{\mathrm{surf}}(\vartheta)=\rmin(1+c\,\eta^2)$
\emph{uniformly in $\varphi$}), the inside condition $r\le r_{\mathrm{surf}}$ reads
$c\,\eta^2\ge\xi$,
so the excluded set is the equatorial band $\{|\eta|<\sqrt{\xi/c}\}$ over the full
azimuth, of measure $2\pi\cdot2\sqrt{\xi/c}=\Order(\xi^{1/2})$; hence
$\sigma=\tfrac12$. This is the $\sqrt{r-\rmin}$ branch removed in the spheroid example by
$r=\rmin+t^2$. (As for the edge, the leading coefficient is nonzero for generic
observables.)

\section{Convex-polyhedron breakpoint enumeration}
\label{app:poly}
\emph{Proof of Proposition~\ref{prop:poly}.}
Substituting $\hat u$ into $\mathbf n_f\!\cdot x\le d_f$ gives \eqref{eq:facecond} directly.
(i) For a non-axial face ($A_f\neq0$), $|B_f/A_f|<1$ makes $\cos(\varphi-\beta_f)=B_f/A_f$ have
the two roots $\varphi=\beta_f\pm\psi_f$, between which the face excludes the circle; the
surviving set is a union of arcs and \eqref{eq:arc} integrates the piecewise-constant contrast
over it exactly. An axial face ($A_f=0$) makes \eqref{eq:facecond} read
$\cos\alpha_f\cos\vartheta\le d_f/r$, independent of $\varphi$, so it gates the entire circle. (ii) The $\varphi$-circle is tangent to a face where its extreme value
$\max_\varphi[\sin\alpha_f\sin\vartheta\cos(\varphi-\beta_f)]+\cos\alpha_f\cos\vartheta
=\cos(\vartheta-\alpha_f)$ (or the min, $\cos(\vartheta+\alpha_f)$) equals $d_f/r$, i.e.\ at
$\vartheta=\pm\alpha_f\pm\acos(d_f/r)$; an edge contributes where its line pierces the
sphere. A tangency to a face \emph{plane} on which that face is not part of $\partial\Omega$
is spurious and removed by the on-boundary test. The local branch type follows from the
order of contact as in Lemma~\ref{lem:zbreak} and Proposition~\ref{prop:radial}: a face
is a smooth quadratic extremum of the polar boundary ($\sigma=1$), a straight edge a linear
corner ($\sigma=\tfrac32$). (iii) The structure of (ii) changes only when the sphere starts
or stops touching a face ($r=d_f$), passes a vertex ($r=|v|$), or becomes tangent to an
edge line.\hfill$\square$

\begin{remark}[Reduction and a double-branch subtlety]\label{rem:poly}
For axial caps ($\alpha_f\in\{0,\pi\}$) and vertical sides ($\alpha_f=\tfrac\pi2$),
\eqref{eq:facecond} and Proposition~\ref{prop:poly} reduce to the regular-prism loci
$\vartheta_{\mathrm{cap}}=\acos(h/(2r))$, $\vartheta_b=\arcsin(b/r)$,
$\vartheta_{R_c}=\arcsin(R_c/r)$ and the critical radii of \S\ref{sec:radial}. We verified
the general loci against brute-force point-in-polyhedron tests for a solid hexagonal
bullet (the formula check is geometry-independent): the analytic $\varphi$-arc endpoints match to
$\confArcEndpointMatch$ and the filtered zenithal breakpoint set matches the measured
non-smoothness loci exactly. One subtlety the single-face view misses: a partial-arc zenith
panel can be bounded by a square-root tangency at \emph{both} ends (the equatorial belt
when $r<R_c$, where no circumradius crossing separates the two inscribed tangencies), and
a single substitution \eqref{eq:sub} cannot absorb both; splitting such a panel at its
midpoint leaves one branch per panel and restores spectral convergence (\S\ref{sec:results}).
Each resulting square-root panel is then mapped by $x=x_c+t^2$ when its branch sits at the lower
endpoint and by $x=x_c-t^2$ when it sits at the upper endpoint, $x_c$ being that endpoint.
\end{remark}

\section{Radial integrator order and the RK4 lift}
\label{app:radial}
\emph{Proof of Lemma~\ref{lem:orderred}.}
The singularity is in the coefficient, not the solution: integrating once,
$\mathbf y(r)=\mathbf y_{\mathrm{an}}(r)+(r-r_c)^{\sigma+1}\mathbf w(r)$ with
$\mathbf y_{\mathrm{an}}$ analytic and $(r-r_c)^{\sigma+1}\mathbf w(r)$ the leading
non-analytic term ($\mathbf w$ bounded; the nonlinear feedback adds only higher-order
branches $(r-r_c)^{2\sigma+1},\dots$). What the estimate uses is the resulting
derivative growth $\mathbf y^{(k)}(r)=\Order((r-r_c)^{\sigma+1-k})$, which the leading
term already fixes.
A one-step method of classical order $p$ applied to a solution of this finite smoothness
attains global order $\min(p,\sigma+1)$, not $p$ (the standard order reduction for
one-step methods on integrands/solutions of limited regularity \citep{hairer_ode1}).
Quantitatively, the local error on the step $[r_c+jh,\,r_c+(j+1)h]$ is
$\Order(h^{p+1}\|\mathbf y^{(p+1)}\|_\infty)=\Order\!\big(h^{p+1}(jh)^{\sigma-p}\big)$ for
$j\ge1$ and $\Order(h^{\sigma+1})$ on the endpoint step; summing,
$\sum_{j\ge1}h^{p+1}(jh)^{\sigma-p}+\Order(h^{\sigma+1})=\Order\!\big(h^{\min(p,\sigma+1)}\big)$
(for $\sigma<p-1$ the series converges and the $\Order(h^{\sigma+1})$ endpoint term
dominates; otherwise the $\Order(h^{p})$ bulk does, and the borderline $\sigma=p-1$ does not
arise, $\sigma$ being non-integer). Finally the map
$(\mathbf P,\mathbf V)\mapsto\mathbf V\mathbf P^{-1}$ is analytic, hence Lipschitz, on the
compact trajectory where $\mathbf P$ is invertible, so it transfers the order to $\Tmat$.\hfill$\square$

\paragraph{Subleading logarithm} The substitution maps the leading half-integer $\sigma$ to an
integer power and restores fourth order for the RK4 lift; we do not claim full analyticity. A
subleading $\xi^2\log\xi$ from inner/outer matching, if present
(Remark~\ref{rem:status}), maps to $t^4\log t$. This does not change the leading algebraic order; it
introduces at most a logarithmic factor that is indistinguishable from clean fourth order over the
fitted resolution range, and fourth order is what is observed directly in \S\ref{sec:results}.

\section*{Declaration of generative AI and AI-assisted technologies in the
          writing process}
During the preparation of this work the authors used Claude (Anthropic) to draft and
edit text. After using this tool, the authors reviewed and edited the content as needed
and take full responsibility for the content of the publication.

\bibliographystyle{elsarticle-num-names}
\bibliography{references}

@article{waterman1971,
  author = {Waterman, P. C.},
  title = {Symmetry, Unitarity, and Geometry in Electromagnetic Scattering},
  journal = {Physical Review D},
  volume = {3}, number = {4}, pages = {825--839}, year = {1971},
  doi = {10.1103/PhysRevD.3.825}
}

@book{mishchenko2002,
  author = {Mishchenko, Michael I. and Travis, Larry D. and Lacis, Andrew A.},
  title = {Scattering, Absorption, and Emission of Light by Small Particles},
  publisher = {Cambridge University Press}, address = {Cambridge}, year = {2002}
}

@article{johnson1988,
  author = {Johnson, B. R.},
  title = {Invariant imbedding {T} matrix approach to electromagnetic scattering},
  journal = {Applied Optics},
  volume = {27}, number = {23}, pages = {4861--4873}, year = {1988},
  doi = {10.1364/AO.27.004861}
}

@article{bi2013,
  author = {Bi, Lei and Yang, Ping and Kattawar, George W. and Mishchenko, Michael I.},
  title = {Efficient implementation of the invariant imbedding {T}-matrix method
           and the separation of variables method applied to large nonspherical
           inhomogeneous particles},
  journal = {Journal of Quantitative Spectroscopy and Radiative Transfer},
  volume = {116}, pages = {169--183}, year = {2013},
  doi = {10.1016/j.jqsrt.2012.11.014}
}

@article{bi2014,
  author = {Bi, Lei and Yang, Ping},
  title = {Accurate simulation of the optical properties of atmospheric ice
           crystals with the invariant imbedding {T}-matrix method},
  journal = {Journal of Quantitative Spectroscopy and Radiative Transfer},
  volume = {138}, pages = {17--35}, year = {2014},
  doi = {10.1016/j.jqsrt.2014.01.013}
}

@book{sun2020iitm,
  author = {Sun, Bingqiang and Bi, Lei and Yang, Ping and Kahnert, Michael and
            Kattawar, George W.},
  title = {Invariant Imbedding {T}-matrix Method for Light Scattering by
           Nonspherical and Inhomogeneous Particles},
  publisher = {Elsevier}, address = {Amsterdam}, year = {2020},
  doi = {10.1016/C2018-0-02999-0}
}

@article{doicu2019,
  author = {Doicu, Adrian and Wriedt, Thomas and Khebbache, Naima},
  title = {An overview of the methods for deriving recurrence relations for
           {T}-matrix calculation},
  journal = {Journal of Quantitative Spectroscopy and Radiative Transfer},
  volume = {224}, pages = {289--302}, year = {2019},
  doi = {10.1016/j.jqsrt.2018.11.029}
}

@misc{transitionmatrices,
  author = {Xiong, Yu and Wu, Zihua},
  title = {{TransitionMatrices.jl}: a high-efficiency open-source {Julia} framework
           for electromagnetic scattering and polarimetric remote sensing of
           nonspherical atmospheric particles},
  year = {2025},
  note = {SSRN preprint, \url{https://doi.org/10.2139/ssrn.5335549}; code:
          \url{https://github.com/JuliaRemoteSensing/TransitionMatrices.jl}}
}

@article{hu2020,
  author = {Hu, Shuai and Liu, Lei and Gao, Taichang and Zeng, Qingwei},
  title = {An analysis of the factors influencing the modeling accuracy of the
           invariant imbedding {T}-matrix method and the optimal design of the
           parameter settings for particles with different geometrical and optical
           properties},
  journal = {Journal of Quantitative Spectroscopy and Radiative Transfer},
  volume = {256}, pages = {107306}, year = {2020},
  doi = {10.1016/j.jqsrt.2020.107306}
}

@article{hu2021sym,
  author = {Hu, Shuai and Liu, Lei and Zeng, Qingwei and Gao, Taichang and Zhang, Feng},
  title = {An investigation of the symmetrical properties in the invariant imbedding
           {T}-matrix method for the nonspherical particles with symmetrical geometry},
  journal = {Journal of Quantitative Spectroscopy and Radiative Transfer},
  volume = {259}, pages = {107401}, year = {2021},
  doi = {10.1016/j.jqsrt.2020.107401}
}

@article{sun2021jac,
  author = {Sun, Bingqiang and Gao, Chenxu and Bi, Lei and Spurr, Robert},
  title = {Analytical {J}acobians of single scattering optical properties using the
           invariant imbedding {T}-matrix method},
  journal = {Optics Express},
  volume = {29}, number = {6}, pages = {9635--9669}, year = {2021},
  doi = {10.1364/OE.421886}
}

@article{hu2023dviim,
  author = {Hu, Shuai and Li, Shulei and Zeng, Qingwei and Liu, Lei},
  title = {Dimension-variable invariant imbedding ({DVIIM}) {T}-matrix computational
           method for the light scattering simulation of atmospheric nonspherical
           particles},
  journal = {Optics Express},
  volume = {31}, number = {6}, pages = {10052--10069}, year = {2023},
  doi = {10.1364/OE.472809}
}

@article{mishchenko2020db,
  author = {Mishchenko, Michael I.},
  title = {Comprehensive thematic {T}-matrix reference database: a 2017--2019 update},
  journal = {Journal of Quantitative Spectroscopy and Radiative Transfer},
  volume = {242}, pages = {106692}, year = {2020},
  doi = {10.1016/j.jqsrt.2019.106692}
}

@article{kahnert2001,
  author = {Kahnert, F. Michael and Stamnes, Jakob J. and Stamnes, Knut},
  title = {Application of the extended boundary condition method to particles with
           sharp edges: a comparison of two surface integration approaches},
  journal = {Applied Optics},
  volume = {40}, number = {18}, pages = {3101--3109}, year = {2001},
  doi = {10.1364/AO.40.003101}
}

@article{ganeshhawkins2025,
  author = {Ganesh, M. and Hawkins, S. C.},
  title = {{T}-matrix computations for light scattering by penetrable particles with
           large aspect ratios},
  journal = {Journal of Quantitative Spectroscopy and Radiative Transfer},
  volume = {334}, pages = {109346}, year = {2025},
  doi = {10.1016/j.jqsrt.2025.109346}
}

@article{zhai2019,
  author = {Zhai, Siyao and Panetta, R. Lee and Yang, Ping},
  title = {Improvements in the computational efficiency and convergence of the
           invariant imbedding {T}-matrix method for spheroids and hexagonal prisms},
  journal = {Optics Express},
  volume = {27}, number = {20}, pages = {A1441--A1457}, year = {2019},
  doi = {10.1364/OE.27.0A1441}
}

@article{schiff1999,
  author = {Schiff, Jeremy and Shnider, S.},
  title = {A Natural Approach to the Numerical Integration of {R}iccati
           Differential Equations},
  journal = {SIAM Journal on Numerical Analysis},
  volume = {36}, number = {5}, pages = {1392--1413}, year = {1999},
  doi = {10.1137/S0036142996307946}
}

@article{dieci1996,
  author = {Dieci, Luca and Eirola, Timo},
  title = {Preserving monotonicity in the numerical solution of {R}iccati
           differential equations},
  journal = {Numerische Mathematik},
  volume = {74}, number = {1}, pages = {35--47}, year = {1996},
  doi = {10.1007/s002110050206}
}

@book{calogero1967,
  author = {Calogero, Francesco},
  title = {Variable Phase Approach to Potential Scattering},
  series = {Mathematics in Science and Engineering}, volume = {35},
  publisher = {Academic Press}, address = {New York}, year = {1967}
}

@book{hairer_ode1,
  author = {Hairer, Ernst and N{\o}rsett, Syvert P. and Wanner, Gerhard},
  title = {Solving Ordinary Differential Equations I: Nonstiff Problems},
  edition = {2nd}, series = {Springer Series in Computational Mathematics}, volume = {8},
  publisher = {Springer}, address = {Berlin}, year = {1993},
  doi = {10.1007/978-3-540-78862-1}
}

@book{trefethen2013,
  author = {Trefethen, Lloyd N.},
  title = {Approximation Theory and Approximation Practice},
  publisher = {SIAM}, address = {Philadelphia}, year = {2013}
}

@book{davis1984,
  author = {Davis, Philip J. and Rabinowitz, Philip},
  title = {Methods of Numerical Integration},
  edition = {2nd}, publisher = {Academic Press}, address = {Orlando}, year = {1984}
}

@book{hairer2006,
  author = {Hairer, Ernst and Lubich, Christian and Wanner, Gerhard},
  title = {Geometric Numerical Integration: Structure-Preserving Algorithms for
           Ordinary Differential Equations},
  edition = {2nd}, series = {Springer Series in Computational Mathematics}, volume = {31},
  publisher = {Springer}, address = {Berlin}, year = {2006},
  doi = {10.1007/3-540-30666-8}
}

@article{zhao2022sym,
  author = {Zhao, Jiaqi and Hu, Shuai and Liu, Xichuan and Li, Shulei},
  title = {The Computational Optimization of the Invariant Imbedding {T} Matrix
           Method for the Particles with {N}-Fold Symmetry},
  journal = {Remote Sensing},
  volume = {14}, number = {16}, pages = {4061}, year = {2022},
  doi = {10.3390/rs14164061}
}

@article{zhang2022vswf,
  author = {Zhang, Yuheng and Ding, Jiachen and Yang, Ping and Panetta, R. Lee},
  title = {Vector spherical wave function truncation in the invariant imbedding
           {T}-matrix method},
  journal = {Optics Express},
  volume = {30}, number = {17}, pages = {30020--30037}, year = {2022},
  doi = {10.1364/OE.459648}
}

@article{sun2022liitm,
  author = {Sun, Bingqiang and Gao, Chenxu and Liang, Dongbin and Liu, Zhaoyuan and
            Liu, Jian},
  title = {Capability and convergence of linearized invariant-imbedding {T}-matrix
           and physical-geometric optics methods for light scattering},
  journal = {Optics Express},
  volume = {30}, number = {21}, pages = {37769--37785}, year = {2022},
  doi = {10.1364/OE.473075}
}

@article{wang2023grid,
  author = {Wang, Zheng and Bi, Lei and Kong, Senyi},
  title = {Flexible implementation of the particle shape and internal inhomogeneity in
           the invariant imbedding {T}-matrix method},
  journal = {Optics Express},
  volume = {31}, number = {18}, pages = {29427--29444}, year = {2023},
  doi = {10.1364/OE.498190}
}

@article{telles1987,
  author = {Telles, J. C. F.},
  title = {A self-adaptive co-ordinate transformation for efficient numerical evaluation
           of general boundary element integrals},
  journal = {International Journal for Numerical Methods in Engineering},
  volume = {24}, number = {5}, pages = {959--973}, year = {1987},
  doi = {10.1002/nme.1620240509}
}

@article{duffy1982,
  author = {Duffy, Michael G.},
  title = {Quadrature over a pyramid or cube of integrands with a singularity at a vertex},
  journal = {SIAM Journal on Numerical Analysis},
  volume = {19}, number = {6}, pages = {1260--1262}, year = {1982},
  doi = {10.1137/0719090}
}

@article{yang2013ice,
  author = {Yang, Ping and Bi, Lei and Baum, Bryan A. and Liou, Kuo-Nan and
            Kattawar, George W. and Mishchenko, Michael I. and Cole, Benjamin},
  title = {Spectrally consistent scattering, absorption, and polarization properties of
           atmospheric ice crystals at wavelengths from 0.2 to 100 {\textmu}m},
  journal = {Journal of the Atmospheric Sciences},
  volume = {70}, number = {1}, pages = {330--347}, year = {2013},
  doi = {10.1175/JAS-D-12-039.1}
}

@article{schmitt2007hollow,
  author = {Schmitt, C. G. and Heymsfield, A. J.},
  title = {On the occurrence of hollow bullet rosette- and column-shaped ice crystals
           in midlatitude cirrus},
  journal = {Journal of the Atmospheric Sciences},
  volume = {64}, number = {12}, pages = {4515--4520}, year = {2007},
  doi = {10.1175/2007JAS2317.1}
}

@article{warren2008,
  author = {Warren, Stephen G. and Brandt, Richard E.},
  title = {Optical constants of ice from the ultraviolet to the microwave:
           A revised compilation},
  journal = {Journal of Geophysical Research: Atmospheres},
  volume = {113}, number = {D14}, pages = {D14220}, year = {2008},
  doi = {10.1029/2007JD009744}
}

\end{document}